\newcommand{\R}{\mathbb{R}} 
 \newcommand{\N}{\mathbb{N}}
\newcommand{\NZ}{\N_0} 
\newcommand{\lk}{\langle}
\newcommand{\rk}{\rangle}
\newcommand{\scp}[2]{\lk {#1},{#2}\rk}
\newcommand{\norm}[1]{{\| {#1} \|}}
\newcommand{\normsqr}[1]{\norm{#1}{}^2}
\newcommand{\seb}[1]{\operatorname{SEB}(#1)}
\newcommand{\conv}[1]{\operatorname{conv}(#1)}
\newcommand{\argmin}{\operatorname{argmin}}
\newcommand{\uast}{u^\ast}
\newcommand{\uastast}[2]{u^{\ast\ast}_{#1,#2}}
\newcommand{\sxg}{{\tilde X}}
\newcommand{\sxs}{X}
\newcommand{\syg}{{\tilde Y}}
\newcommand{\sys}{Y}
\newcommand{\cartesian}[1]{(#1)}
\newcommand{\polar}[2]{[#1;#2]}
\newcommand{\uu}{u}
\newcommand{\xx}{x}
\newcommand{\cc}{\chi}
\newcommand{\dd}{\xi}
\newcommand{\eps}{\epsilon}
\newcommand{\xw}{\phi}
\newcommand{\cw}{\psi}
\newcommand{\uw}{\alpha}
\newcommand{\la}{\lambda}
\newcommand{\fee}{\phi}
\newcommand{\feebar}{\bar\phi}
\newcommand{\feetwobar}{\bar{\bar\phi}}
\newcommand{\transl}{\mathcal{T}}
\newcommand{\bv}[1]{e_{#1}}
\newcommand{\uset}{V} 
\newcommand{\forbiddenset}{T}
\newtheorem{theorem}{Theorem}
\newtheorem{corollary}[theorem]{Corollary}
\newtheorem{lemma}[theorem]{Lemma}
\newtheorem{proposition}[theorem]{Proposition}
\theoremstyle{definition}
\newtheorem{example}[theorem]{Example}
\begin{document}
\author{Thomas Binder}
\address{University of L{\"u}beck,
Institute of Mathematics, Wallstra{\ss}e 40,
D-23560 L{\"u}beck, Germany}
\email{thmsbinder@gmail.com}
\author{Thomas Martinetz}
\address{University of L{\"u}beck,
Institute for Neuro- und Bioinformatics,
Ratzeburger Allee 160,
D-23538 L{\"u}beck, Germany}
\email{martinetz@informatik.uni-luebeck.de}

\title[On the boundedness of an iteration]{%
On the boundedness of an iteration involving
points on the hypersphere}

\begin{abstract}
For a finite set of points $\sxs$ on the unit hypersphere in $\R^d$
we consider the iteration $\uu_{i+1}=\uu_i+\cc_i$, where
$\cc_i$ is the point of $\sxs$ farthest from $\uu_i$.
Restricting to the case where the origin is contained in
the convex hull of $\sxs$ we study the maximal length of $\uu_i$.
We give sharp upper bounds for the length of $\uu_i$ independently
of $\sxs$. Precisely, this upper bound is infinity
for $d\ge 3$ and $\sqrt2$ for $d=2$.
\end{abstract}

\subjclass[2000]{
40A05 (Primary),
52C35.
}
\maketitle

\nocite{FG03}

\section{Introduction and overview}

Throughout this paper we will assume that $d\ge 2$. By
$\R^d$ we denote $d$-dimensional Euclidean space,
equipped with the standard scalar product $\scp{\cdot}{\cdot}$
and induced norm $||\cdot||$.
Moreover $S^l(r)$ denotes the $l$-dimensional sphere of radius $r$,
and $S^l:=S^l(1)$.
These spheres are always considered as embedded in $\R^d$.
Let $\sxs=\{\xx_1,\dots,\xx_n\}\subseteq S^{d-1}\subseteq \R^d$
be a finite set on the unit hypersphere.
Without mentioning this each time, we assume that
the linear space spanned by the elements of $\sxs$ equals $\R^d$,
i.e.\ $d$ cannot be reduced. Consider the iteration
\[
   \uu_0:=0,\qquad \uu_{i+1} := \uu_i + \cc_i,
\]
where $i\in\NZ$ and
$\cc_i$ is the element of $\sxs$ which is farthest away from
$\uu_i$ (which happens to be $\argmin_{x\in \sxs} \scp{x}{\uu_i}$).
In case there are several elements of
$\sxs$ at maximal distance, just choose any of them.
Due to this ambiguity there are many iterations
$(\uu_i)_{i=0}^\infty$ for a particular set $\sxs$.
By $U(\sxs)$ we denote the set of vectors occurring in any of these iterations.
Let
\[
  \uast(\sxs) := \sup\,\{\,\norm{u} \,|\, u\in U(\sxs)\}
\]
be the greatest length reached during any of these iterations.
The question which values $\uast(\sxs)$ can take is simple and intriguing;
it was brought up in connection with the rate of convergence of an
iterative approach of computing the smallest enclosing ball of a
point set, as described in the following.


Let $\syg\subseteq\R^d$ be a finite set of points.
Then the smallest enclosing ball $\seb{\syg}$ of $\syg$
exists and is unique \cite{Wel91}.
We assume that $\syg$ has at least two elements.
By $c\in\R^d$ and $R\in\R^+$ we denote center and radius of $\seb{\syg}$,
respectively.
B\u{a}doiu and Clarkson \cite{BC03} introduced
the following approximation of $c$:
\begin{equation}\label{eq:approx}
  c_0 := 0,\qquad
  c_{i+1} := c_i + \frac{1}{i+1}(\dd_i-c_i),
\end{equation}
where $i\in\N$ and $\dd_i$ is the element of $\syg$ farthest away from $c_i$.
This approximation $(c_i)_{i=0}^\infty$ is related to the iteration
$(\uu_i)_{i=0}^\infty$ by
$R\uu_i=i(c_i-c)$ which implies $\uu_{i+1}=\uu_i+\frac{\dd_i-c}{R}$.
The set $\sxg$ connected to $(\uu_i)_{i=0}^\infty$ is given by
\begin{equation}\label{eqn:xfromy}
\sxg := \Bigl\{ \frac{1}{R} (y - c) \,\,\bigl|\bigr.\,\, y\in\syg \Bigr\}. 
\end{equation}
Unlike $\sxs$ the set $\sxg$ can contain also points in the interior
of the unit hypersphere.
Martinetz, Madany and Mota \cite{MMM06} show that after a finite
number of steps all $\dd_i$ will lie on the boundary of $\seb{\syg}$,
i.e.\ $\dd_i\in\sys$ for all $i\ge i_0$, where $\sys\subseteq\syg$
consists of all points on the surface of $\seb{\syg}$.
This clarifies the correspondence.

While the approximation is extremely easy to use, the question
of convergence needs to be answered.
In \cite{BC03} it is shown that for $i\in\N$
\begin{equation}\label{eq:convspeed}
  \frac{\norm{c-c_i}}{R} \le \frac{1}{\sqrt{i}}.
\end{equation}
\cite{MMM06} aims at proving faster convergence
than \eqref{eq:convspeed}. In particular:

\begin{theorem}[\cite{MMM06}, Theorem~2]\label{thm:mmm2}
Let $\syg\subseteq\R^d$ be a finite set with at least two elements,
and let $\sxg$ be given by \eqref{eqn:xfromy}.
Consider the approximation \eqref{eq:approx} of $\seb{\syg}$.
Then for all $i\in\N$
\[
   \frac{\norm{c - c_i}}{R} \le \frac{\uast(\sxg)}{i},
\]
where the definition of $\uast$ has been extended
to sets $\sxg$ with points on or in the interior of the unit
hypersphere in a straightforward manner.
\end{theorem}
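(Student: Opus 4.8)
The plan is to obtain the theorem directly from the correspondence between the two iterations that is recorded in the introduction. Writing $\uu_i := \tfrac{i}{R}(c_i - c)$, so that $R\uu_i = i(c_i - c)$ and $\norm{c - c_i} = \tfrac{R}{i}\norm{\uu_i}$ for $i\ge 1$, the asserted inequality is equivalent to $\norm{\uu_i} \le \uast(\sxg)$. Since $\uast(\sxg) = \sup\{\norm{u}\mid u\in U(\sxg)\}$, it suffices to show that the sequence $(\uu_i)$ arising from the approximation \eqref{eq:approx} is one of the iterations occurring in the definition of $U(\sxg)$; then $\uu_i\in U(\sxg)$ for every $i$ and we are done.

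To establish this I would argue in two steps. \emph{(i) The recursion.} Substituting \eqref{eq:approx} into the definition of $\uu_i$ and simplifying gives
\[
  \uu_{i+1} = \tfrac{i+1}{R}\Bigl(c_i + \tfrac{1}{i+1}(\dd_i - c_i) - c\Bigr)
            = \tfrac{i}{R}(c_i - c) + \tfrac{1}{R}(\dd_i - c)
            = \uu_i + \cc_i, \qquad \cc_i := \tfrac{1}{R}(\dd_i - c),
\]
while $\uu_0 = 0$; and by \eqref{eqn:xfromy} the step $\cc_i$ lies in $\sxg$. \emph{(ii) Admissibility of the step.} For $i\ge 1$ and $x = \tfrac1R(y - c)\in\sxg$ we have $y - c_i = Rx - \tfrac{R}{i}\uu_i$, hence $\norm{y - c_i} = \tfrac{R}{i}\norm{i\,x - \uu_i}$; since $\dd_i$ maximises $\norm{y - c_i}$ over $y\in\syg$, the point $\cc_i$ maximises $\norm{i\,x - \uu_i}$ over $x\in\sxg$. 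This is exactly the form taken by the rule ``pick the point of $\sxg$ farthest from $\uu_i$'' when the definition of the iteration is extended to sets with interior points, the extension being the one in which at stage $i$ the rule returns any $\cc_i\in\argmin_{x\in\sxg}\bigl(\scp{x}{\uu_i} - \tfrac{i}{2}\normsqr{x}\bigr)$; for $i=0$ every point of $\sxg$ is admissible, and for $\sxg\subseteq S^{d-1}$ the rule collapses to the original one, minimising $\scp{x}{\uu_i}$, because $\normsqr{x}$ is then constant. Hence $(\uu_i)$ is an admissible iteration for $\sxg$, so $\uu_i\in U(\sxg)$, $\norm{\uu_i}\le\uast(\sxg)$, and $\norm{c - c_i}/R = \norm{\uu_i}/i \le \uast(\sxg)/i$.

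Step (i) and the final rescaling are routine bookkeeping; the one substantive point is (ii) --- making sure that the B\u{a}doiu--Clarkson greedy choice ``farthest point of $\syg$ from $c_i$'' transports, under the affine map $y\mapsto\tfrac1R(y - c)$, to an \emph{admissible} greedy choice for the abstract iteration on $\sxg$. For sets on the hypersphere there is nothing to check; in the general case this is precisely the place where one must pin down the straightforward extension of $U(\sxg)$ to sets with interior points so as to include these transported sequences, after which the argument goes through verbatim.
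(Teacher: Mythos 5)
The paper does not prove Theorem~\ref{thm:mmm2}; it quotes the statement from \cite{MMM06} and uses it only as motivation, so there is no paper proof to compare your argument against. What can be assessed is whether your argument is sound on its own terms.

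Your step (i) and the final rescaling $\norm{c-c_i}=\tfrac{R}{i}\norm{\uu_i}$ are correct. The substantive issue is in step (ii). The inequality $\norm{\uu_i}\le\uast(\sxg)$ requires the transported sequence to lie in $U(\sxg)$, and you achieve this by \emph{defining} the iteration rule on $\sxg$ to be the time-dependent rule $\cc_i\in\argmin_{x\in\sxg}\bigl(\scp{x}{\uu_i}-\tfrac{i}{2}\normsqr{x}\bigr)$ (equivalently, ``farthest from $\uu_i/i$''), with an ad hoc exception at $i=0$ where every point is allowed. That is precisely the rule one gets by transporting the B\u{a}doiu--Clarkson greedy choice, so under it the theorem becomes a tautology. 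But the paper's iteration on $\sxs$ is defined as ``$\cc_i$ is the element farthest from $\uu_i$'' (equivalently $\argmin_{x\in\sxs}\scp{x}{\uu_i}$), and the straightforward reading of ``extended \ldots in a straightforward manner'' is to keep exactly that rule --- no index-dependent weight --- while dropping the restriction to the sphere. The three candidate rules (farthest from $\uu_i$; $\argmin\scp{x}{\uu_i}$; farthest from $\uu_i/i$) agree whenever the competing points all have unit norm, but they genuinely disagree on interior points: a farthest point from $\uu_i$ must have maximal norm at $i=0$, whereas your rule admits any point; and for $i\ge 2$ the minimiser of $\scp{x}{\uu_i}-\tfrac{i}{2}\normsqr{x}$ need not minimise $\scp{x}{\uu_i}$ nor maximise $\norm{x-\uu_i}$. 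Since the paper observes that the B\u{a}doiu--Clarkson step can choose interior points for finitely many initial indices, this is not a vacuous distinction.

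So the gap is: your argument is circular unless you can show that \cite{MMM06} actually adopts your rule as the extension of $U(\sxg)$. If instead the extension keeps the original greedy rule verbatim, then the transported sequence is not an admissible iteration in the early stages and you need a separate argument there --- for instance, the fact (quoted in the paper) that $\dd_i$ lies on the boundary of $\seb{\syg}$ for all $i\ge i_0$, which makes the rules coincide for $i\ge i_0$, combined with a direct bound on $\norm{\uu_i}$ for the finitely many $i<i_0$ showing it does not exceed $\uast(\sxg)$. As written, your proof silently builds the conclusion into the definition and should at least flag which extension it is using and why that choice matches the cited theorem.
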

In view of Theorem~\ref{thm:mmm2}, a finite value of $\uast$ or
even a uniform upper bound independent of $\sxs$ is desirable.
Before stating our results on the latter, we need some preparations.

The connection between $(c_i)_{i=0}^\infty$ and $(\uu_i)_{i=0}^\infty$
is further illustrated by
\begin{proposition}\label{prop:zerobal}
For a finite set $\sxs\subseteq S^{d-1}\subseteq \R^d$
the following statements are equivalent.
\begin{enumerate}
\item $\seb{\sxs} = S^{d-1}$,
\item The origin $0\in\R^d$ is contained in $\conv{\sxs}$,
\item $\delta(\sxs)\ge 0$, where
\[
   \delta(\sxs) := - \max_{\norm{u}=1} \min_{x\in \sxs} \scp{x}{u}.
\]
\end{enumerate}
\end{proposition}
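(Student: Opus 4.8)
\textit{Proof plan.} The plan is to establish the cycle of implications (i) $\Rightarrow$ (iii) $\Rightarrow$ (ii) $\Rightarrow$ (i). The one observation used repeatedly is that, since $\sxs\subseteq S^{d-1}$, every $x\in\sxs$ has $\norm{x}=1$, so the closed unit ball centered at the origin already encloses $\sxs$; hence $\seb{\sxs}$ has radius at most $1$, and by \cite{Wel91} it is unique. Consequently, to prove (i) it suffices to show that this minimal radius equals $1$ and that the center is $0$, and conversely (i) fails as soon as $\sxs$ is enclosed by some ball of radius strictly less than $1$.

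For (ii) $\Rightarrow$ (i) I would use an averaging argument. Write $0=\sum_i\la_i x_i$ as a convex combination of points $x_i\in\sxs$. If $\bar B(c,R)$ is any ball enclosing $\sxs$, then $\normsqr{x_i-c}\le R^2$, that is $1-2\scp{x_i}{c}+\normsqr{c}\le R^2$, for every $i$. Multiplying the $i$-th inequality by $\la_i$ and summing makes the scalar-product term vanish (because $\sum_i\la_i x_i=0$), leaving $R^2\ge 1+\normsqr{c}\ge 1$. Thus every enclosing ball has radius at least $1$; since the unit ball around $0$ is enclosing, the smallest enclosing ball has radius exactly $1$, and equality in $R^2\ge 1+\normsqr{c}$ forces $c=0$. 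Hence $\seb{\sxs}=S^{d-1}$.

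For the two remaining implications I would argue by contraposition. If (iii) fails, i.e.\ $\delta(\sxs)<0$, then $\max_{\norm{u}=1}\min_{x\in\sxs}\scp{x}{u}>0$; since $\sxs$ is finite, $u\mapsto\min_{x\in\sxs}\scp{x}{u}$ is continuous, so on the compact unit sphere the maximum is attained at some $u_0$, and $\scp{x}{u_0}\ge\eps$ for all $x\in\sxs$ with $\eps:=-\delta(\sxs)>0$. For $0<t<2\eps$ one gets $\normsqr{x-tu_0}=1-2t\scp{x}{u_0}+t^2\le 1-t(2\eps-t)<1$, so $\sxs$ is enclosed by a ball of radius $<1$ and (i) fails. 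Conversely, if (ii) fails, then $\conv{\sxs}$ is a nonempty compact convex set (a polytope) not containing the origin, so the separating-hyperplane theorem yields a unit vector $u$ and a number $\alpha>0$ with $\scp{x}{u}\ge\alpha$ for all $x\in\sxs$; hence $\min_{x\in\sxs}\scp{x}{u}\ge\alpha$, so $\delta(\sxs)\le-\alpha<0$ and (iii) fails. This closes the cycle.

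No step here is deep. The points that need care are the attainment of the maximum in the definition of $\delta(\sxs)$ (compactness of the unit sphere together with finiteness of $\sxs$) and making precise what ``$\seb{\sxs}=S^{d-1}$'' means, for which the uniqueness statement of \cite{Wel91} combined with the inequality $R^2\ge 1+\normsqr{c}$ is exactly what is needed. I do not expect any genuine obstacle beyond bookkeeping.
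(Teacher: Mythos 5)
Your proof is correct, and it follows a genuinely different route from the paper. The paper dispatches the proposition in two lines: (i)$\Leftrightarrow$(ii) is cited to a lemma of Seidel in \cite{FGK03}, and (ii)$\Leftrightarrow$(iii) is reduced to the standard characterization that $p\in\conv{\sxs}$ iff $\min_{x\in\sxs}\scp{x-p}{u}\le 0$ for all unit $u$ (which, with $p=0$, is exactly the statement $\delta(\sxs)\ge 0$). You instead give a self-contained cycle (ii)$\Rightarrow$(i)$\Rightarrow$(iii)$\Rightarrow$(ii): the averaging argument $R^2\ge 1+\normsqr{c}$ for any enclosing ball is a clean proof of the special case of Seidel's lemma needed here and directly pins down both the radius and the center; the contrapositives of the other two implications use, respectively, an explicit small shift of the center along $u_0$ to shrink the enclosing radius below $1$, and strict separation of the compact polytope $\conv{\sxs}$ from the origin. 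What your route buys is independence from the external reference and an explicit, quantitative witness in each direction; what it costs is roughly a page versus two lines. One small presentational point: the equality $\seb{\sxs}=S^{d-1}$ is, read literally, between a ball and a sphere, so it pays to say once (as you implicitly do) that the intended meaning is that $\seb{\sxs}$ has center $0$ and radius $1$. Also, the attainment of the maximum defining $\delta$ is automatic once you note that $u\mapsto\min_{x\in\sxs}\scp{x}{u}$ is a minimum of finitely many continuous (linear) functions on the compact sphere, which you do; that is the only place compactness is really invoked.
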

\begin{proof}
(i)$\Longleftrightarrow$(ii) is due to R.~Seidel
(cf.\ Lemma~1 in \cite{FGK03}).
(ii)$\Longleftrightarrow$(iii) follows from the fact that
a point $p\in\R^d$ lies in the convex hull of $\sxs$ if and
only if $\min_{x\in \sxs} \scp{x-p}{u} \le 0$ for all unit vectors $u$.
\end{proof}

$\sxs$ is called $0$-balanced if $0\not\in\conv{\sxs}$.
For $1\le b\le d-1$ the set
$\sxs$ is called $b$-balanced, if $0$ is a point
on the boundary of $\conv{\sxs}$ and is contained in a
$b$-dimensional face, but not in a $(b-1)$-dimensional
face of $\conv{\sxs}$.
If $0$ is an inner point of $\conv{\sxs}$, then $\sxs$ is
called $d$-balanced or balanced.
Having the same balance property is an
equivalence relation on all sets $\sxs$ under consideration.

Note that $\delta(\sxs)$ is strictly positive if and only if
$\sxs$ is $d$-balanced, and Proposition~\ref{prop:zerobal} characterizes
all sets $\sxs$ that are not $0$-balanced.

\begin{theorem}\label{thm:mmm}
Let $\sxs$ be a finite set of unit vectors in $\R^d$.
\begin{enumerate}
\item If $\sxs$ is $0$-balanced, then $\uast(\sxs)=\infty$.
\item If $\sxs$ is $b$-balanced for $0<b\le d$,
then $\uast(\sxs)<\infty$.
\end{enumerate}
\end{theorem}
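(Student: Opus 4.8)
The plan is to prove the two parts separately and, for part~(ii), to distinguish the fully balanced case $b=d$ from the case $0<b<d$.

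\emph{Part (i).} Since $\sxs$ is $0$-balanced, $0\notin\conv{\sxs}$, and as $\conv{\sxs}$ is compact there are a unit vector $v$ and a constant $\gamma>0$ with $\scp{x}{v}\ge\gamma$ for all $x\in\sxs$ (take $v=p/\norm p$, $\gamma=\norm p$, where $p$ is the point of $\conv{\sxs}$ nearest to $0$). Because $\cc_i\in\sxs$ for every $i$, no matter how ties are resolved, $\scp{\uu_{i+1}}{v}=\scp{\uu_i}{v}+\scp{\cc_i}{v}\ge\scp{\uu_i}{v}+\gamma$, hence $\scp{\uu_i}{v}\ge i\gamma$ and $\norm{\uu_i}\ge i\gamma\to\infty$ along every iteration. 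Thus $\uast(\sxs)=\infty$.

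\emph{The basic estimate and the case $b=d$.} From $\uu_{i+1}=\uu_i+\cc_i$ with $\norm{\cc_i}=1$ we get $\normsqr{\uu_{i+1}}=\normsqr{\uu_i}+2\scp{\uu_i}{\cc_i}+1$, and since $\cc_i$ minimizes $\scp{x}{\uu_i}$ over $\sxs$, $\scp{\uu_i}{\cc_i}=\min_{x\in\sxs}\scp{x}{\uu_i}\le-\delta(\sxs)\norm{\uu_i}$ by the definition of $\delta$. If $\sxs$ is $d$-balanced then $\delta:=\delta(\sxs)>0$, and these two relations alone keep any iteration governed by them, started at an arbitrary $\uu_0$, inside the ball of radius $\max\bigl(\norm{\uu_0},\sqrt{1+1/(4\delta^2)}\bigr)$: once $\norm{\uu_i}\ge1/(2\delta)$ the estimate gives $\normsqr{\uu_{i+1}}\le\normsqr{\uu_i}$, while as long as $\norm{\uu_i}<1/(2\delta)$ the squared length can grow by at most $1$. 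I would record this as a lemma, since it is reused below in a lower dimension; applied with $\uu_0=0$ it already settles $b=d$.

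\emph{The case $0<b<d$.} Here $0$ lies in the relative interior of the unique minimal face $F$ of $\conv{\sxs}$ containing it, and $\dim F=b$; its affine hull $L:=\operatorname{aff}F$ is a $b$-dimensional linear subspace. Choosing a supporting hyperplane with $F=\conv{\sxs}\cap H$ and, since $0\in F$, writing $H=v^\perp$ with $\scp{x}{v}\le0$ on $\conv{\sxs}$, one gets $\scp{x}{v}\le0$ for all $x\in\sxs$ with equality precisely for $x\in\sxs\cap L$, and $v\in L^\perp$. Put $A:=\sxs\cap L$ and $B:=\sxs\setminus L$; then $A$ spans $L$ and has $0$ in the relative interior of $\conv{A}$, so $A$ is fully balanced inside $L\cong\R^b$ with $\delta(A)>0$, and there is $\eta>0$ with $\scp{x}{v}\le-\eta$ for all $x\in B$. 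Writing $\uu_i=p_i+q_i$ with $p_i\in L$, $q_i\in L^\perp$, one checks: an \emph{$A$-step} ($\cc_i\in A$) leaves $q_i$ unchanged and makes $p_{i+1}=p_i+\cc_i$ one step of the $A$-iteration inside $L$, because $\scp{x}{\uu_i}=\scp{x}{p_i}$ for $x\in A$; a \emph{$B$-step} ($\cc_i\in B$) makes $\scp{\uu_{i+1}}{v}$ drop by at least $\eta$; and in both cases $\scp{\uu_i}{\cc_i}\le\min_{a\in A}\scp{a}{p_i}\le-\delta(A)\norm{p_i}\le0$.

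\emph{Finitely many $B$-steps, and the conclusion.} This is the step I expect to be the main obstacle. Consider the potential $\Psi_i:=\normsqr{\uu_i}+\la\scp{\uu_i}{v}$, which is bounded below by $-\la^2/4$ for every $\la>0$. On a $B$-step it decreases by at least $\la\eta-1$. On any maximal run of consecutive $A$-steps it increases by at most $1+1/(4\delta(A)^2)$: during the run $q_i$ is frozen, so $\Psi_i$ varies like $\normsqr{p_i}$, and $\normsqr{p_i}$ is controlled by the lemma applied to the $A$-iteration in $L$. Choosing $\la$ so large that $\la\eta-1$ exceeds this fixed increase, each $B$-step produces a net decrease of $\Psi$ (over the preceding $A$-run together with the $B$-step) by a fixed positive amount, so infinitely many $B$-steps would force $\Psi_i\to-\infty$, contradicting the lower bound. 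Hence only finitely many $B$-steps occur; after the last one $q_i$ equals a fixed vector $\bar q$ and $p_i$ performs an $A$-iteration in $L\cong\R^b$, so the lemma bounds $\norm{p_i}$, whence $\norm{\uu_i}\le\norm{p_i}+\norm{\bar q}$, and the finitely many earlier iterates are bounded trivially. Therefore $\uast(\sxs)<\infty$. The other point needing care is the convex-geometry bookkeeping that produces $L$, $v$, $A$, $B$ with the stated separation properties.
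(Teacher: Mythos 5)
Your proof of part~(i) is essentially the paper's: project onto the unit vector pointing from the origin to the nearest point $T$ of $\conv{\sxs}$; every step gains at least $\gamma=\norm{T}$ in that direction, for any tie-breaking rule. (The paper's line ``$\norm{\cc_j}\ge\epsilon$'' is evidently a slip — $\cc_j$ is always a unit vector — for the inner-product inequality $\scp{\cc_j}{T}/\norm{T}\ge\epsilon$, and your formulation is the correct one.)

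For part~(ii) the paper offers no argument at all; it simply defers to \cite{MMM06}. Your proof is therefore a genuinely different, self-contained route. The $d$-balanced step via $\normsqr{\uu_{i+1}}\le\normsqr{\uu_i}-2\delta\norm{\uu_i}+1$ is sound and in fact yields the slightly sharper bound $\sqrt{1+1/(4\delta^2)}$ compared with $\frac{1}{2\delta}+1$ that the paper quotes from \cite{MMM06}. The reduction for $0<b<d$ — exposing the minimal face $F\ni 0$ with affine hull $L$ and outward normal $v\in L^\perp$, splitting $\sxs$ into $A=\sxs\cap L$ (balanced in $L$, so $\delta(A)>0$) and $B=\sxs\setminus L$ (with $\scp{x}{v}\le-\eta<0$), noting that an $A$-step is a valid step of the $A$-iteration on $p_i$ because $\scp{a}{\uu_i}=\scp{a}{p_i}$ for $a\in A$, and then controlling the number of $B$-steps with the potential $\Psi_i=\normsqr{\uu_i}+\lambda\scp{\uu_i}{v}$ — checks out. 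The polytope facts you invoke (every face is exposed; $\conv{\sxs}\cap\operatorname{aff}F=F$) are standard.

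One point should be made explicit, because $\uast(\sxs)$ is a supremum over \emph{all} admissible iterations, not just one: your closing sentence (``only finitely many $B$-steps occur \dots the finitely many earlier iterates are bounded trivially'') establishes only that each single iteration is bounded, which is formally weaker than $\uast(\sxs)<\infty$. Your argument does deliver the uniform statement, since the per-$B$-step drop $\lambda\eta-1$, the per-$A$-run rise $\le C:=1+1/(4\delta(A)^2)$, and the floor $\Psi_i\ge-\lambda^2/4$ are all independent of the iteration; choosing $\lambda$ with $\lambda\eta-1>C$, the number of $B$-steps in any iteration is at most $N:=\lambda^2/\bigl(4(\lambda\eta-1-C)\bigr)$, whence $\Psi_i\le C$ throughout, $|\scp{\uu_i}{v}|=|\scp{q_i}{v}|\le N$, and $\normsqr{\uu_i}=\Psi_i-\lambda\scp{\uu_i}{v}\le C+\lambda N$ uniformly. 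With that spelled out the proof is complete.
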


\begin{proof}
Again, (ii) is shown in \cite{MMM06}; it remains to prove (i).
Since $\conv \sxs$ is compact,
there is a point $T\in\conv \sxs$ which is closest to the
origin. Let $\epsilon:=|OT|$. Clearly $||\cc_j||\ge\epsilon$ for
all $j\in\NZ$, therefore
$||u_i||=||\sum_{j=0}^{i-1} \cc_j|| \ge i\epsilon$
is an unbounded sequence for $i\in\NZ$.
\end{proof}

For $0\le b\le d$ we define
\[
  \uastast{d}{b} := \sup\,\{\,\uast(\sxs) \,|\,
    \sxs\subseteq S^{d-1}\subseteq \R^d\,\,
    \text{finite and $b$-balanced} \}.
\]
Our goal is to compute $\uastast{d}{b}$ for all possible $d$ and $b$.

\begin{theorem}\label{thm:main2}
For $d=2$ we have $\uastast{2}{0}=\infty$,
while $\uastast{2}{1}=\uastast{2}{2}=\sqrt{2}$.
\end{theorem}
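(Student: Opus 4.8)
The value $\uastast{2}{0}=\infty$ is immediate from Theorem~\ref{thm:mmm}(i). For the lower bounds I would exhibit explicit witnesses. The set $\sxs=\{(1,0),(0,1),(-1,0)\}$ is $1$-balanced, since the origin lies in the relative interior of the edge of $\conv{\sxs}$ joining $(1,0)$ and $(-1,0)$; choosing $\cc_0=(0,1)$ and then $\cc_1=(1,0)$ yields $\uu_2=(1,1)$, so $\uast(\sxs)\ge\sqrt2$ and hence $\uastast{2}{1}\ge\sqrt2$. Perturbing this to the $2$-balanced family $\sxs_\eps=\{(1,0),(0,1),(-\cos\eps,\sin\eps),(-\cos\eps,-\sin\eps)\}$, whose largest angular gap is $\pi-\eps<\pi$, the iteration started with $\cc_0=(0,1)$ is forced to take $\uu_2=\uu_1+(-\cos\eps,-\sin\eps)$, so $\normsqr{\uu_2}=2-2\sin\eps\to2$ as $\eps\downarrow0$, giving $\uastast{2}{2}\ge\sqrt2$.

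For the matching upper bound it suffices to prove $\uast(\sxs)\le\sqrt2$ for every finite, non-$0$-balanced $\sxs\subseteq S^1$. From $\normsqr{\uu_{i+1}}=\normsqr{\uu_i}+1+2\scp{\cc_i}{\uu_i}$ together with $\scp{\cc_i}{\uu_i}=\min_{x\in\sxs}\scp{x}{\uu_i}\le0$ (Proposition~\ref{prop:zerobal}) we get $\normsqr{\uu_{i+1}}\le\normsqr{\uu_i}+1$; since $\normsqr{\uu_0}=0$ and a step from $\normsqr{\uu_i}\le1$ keeps $\normsqr{\uu_{i+1}}\le2$ for free, it suffices to control the iterates with $1<\normsqr{\uu_i}\le2$, which I would do in each balance case via a stronger, $\sxs$-adapted invariant. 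In the $1$-balanced case I would argue directly. Here $\sxs$ contains an antipodal pair $\{p,-p\}$ with all other points strictly on one side of $\R p$; let $q$ be the unit normal to $p$ on that side, so $\scp{x}{q}\ge0$ for all $x\in\sxs$, with equality only for $x=\pm p$. The key point is that for every nonzero $\uu=s\,p+t\,q$ with $t\ge0$ the functional $x\mapsto\scp{x}{\uu}$ on the closed half-circle $\{\,x\in S^1:\scp{x}{q}\ge0\,\}$ containing $\sxs$ is minimised, with value $-|s|$, at one of the endpoints $\pm p$ (uniquely there unless $s=0$); hence the point of $\sxs$ farthest from $\uu$ is $p$ or $-p$. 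Therefore $t_i=\scp{\uu_i}{q}$ is constant along any run as long as $\uu_i\ne0$ and becomes $\scp{\cc_i}{q}\in[0,1]$ after a step from $\uu_i=0$, so $t_i\in[0,1]$ always; and every step either adds $\pm p$ moving $s_i=\scp{\uu_i}{p}$ toward $0$ or restarts from $\uu_i=0$, so $s_i\in[-1,1]$ always. Thus $\normsqr{\uu_i}=s_i^2+t_i^2\le2$, and the bound is attained ($\uu=p+q$).

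The $2$-balanced case is the heart of the proof. Now $0$ is interior to $\conv{\sxs}$, so $\delta:=\delta(\sxs)=\cos(G/2)>0$, where $G<\pi$ is the largest angular gap of $\sxs$. Since the point of $\sxs$ farthest from $\uu_i$ deviates from $-\uu_i/\norm{\uu_i}$ by at most $G/2$, one gets the crude contraction $\normsqr{\uu_{i+1}}\le\normsqr{\uu_i}+1-2\delta\norm{\uu_i}$, which reproves $\uast(\sxs)<\infty$ but is much weaker than $\sqrt2$ when $G$ is close to $\pi$; the proof must therefore use that $\uu_i=\cc_0+\dots+\cc_{i-1}$ and not merely its length. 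My plan is to fix the unit vector $e$ bisecting the complement of the largest gap (so that $-e$ is the midpoint of that gap), set $f\perp e$, and establish by induction the sharper invariant $p_i:=\scp{\uu_i}{e}\in[-1,1]$ and $q_i:=\scp{\uu_i}{f}\in[-1,1]$, which forces $\normsqr{\uu_i}\le2$ and degenerates, as $G\uparrow\pi$, exactly to the $1$-balanced picture and the sharpness of the constant. The inductive step would be a case distinction on the gap containing $-\uu_i/\norm{\uu_i}$: if it is the largest gap and $\uu_i$ points near $e$, then $\cc_i$ is an endpoint of that gap, $p_i$ drops by $\delta$ and $q_i$ moves by $\mp\sin(G/2)$ while staying in $[-1,1]$ (this is where the slack $\norm{\uu_i}\le\sqrt2$ is spent); if $-\uu_i/\norm{\uu_i}$ lies nearer an endpoint of the largest gap, or inside a smaller gap, the crude contraction already gives $\normsqr{\uu_{i+1}}\le2$; and if $\uu_i$ points into the gap (so $p_i<0$) then $\cc_i$ is close to $e$ and pulls $\uu_{i+1}$ back toward the origin.

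The hard part will be exactly this last induction: pinning down the correct invariant --- the candidate $p_i,q_i\in[-1,1]$ may need fine-tuning at the boundary --- and, above all, closing the borderline regime in which $G$ is near $\pi$ and $\norm{\uu_i}$ is just above $1$, since that is the only regime where a single step can lengthen $\uu_i$, so the argument there has to exploit the iteration's history rather than the current value of $\uu_i$ alone.
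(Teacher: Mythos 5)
Your treatment of $\uastast{2}{0}=\infty$ and the two lower bounds is correct and matches the paper's. Your $1$-balanced upper bound argument is also correct and is in fact \emph{nicer} than the paper's: by exploiting the forced antipodal pair $\{p,-p\}$ and the observation that the minimizer of $x\mapsto\scp{x}{u}$ over the closed half-circle containing $\sxs$ must be $\pm p$, you get the clean invariant $\scp{\uu_i}{p}\in[-1,1]$, $\scp{\uu_i}{q}\in[0,1]$ with very little work. The paper instead handles $b=1$ as the degenerate case $\fee=0$ of its general $d=2$ machinery, so your separate argument genuinely simplifies that part.

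The gap is in the $2$-balanced case, which you yourself flag as ``the hard part.'' What you present there is a plan, not a proof: the claimed invariant $\scp{\uu_i}{e}\in[-1,1]$, $\scp{\uu_i}{f}\in[-1,1]$ is never verified, and the inductive step is only described informally (``$p_i$ drops by $\delta$ \dots while staying in $[-1,1]$ (this is where the slack \dots\ is spent)'', ``the crude contraction already gives \dots'', ``pulls $\uu_{i+1}$ back toward the origin''). None of these sub-claims is argued. In particular, showing that the $f$-component stays in $[-1,1]$ when $\cc_i$ has a positive $f$-component (which happens whenever $-\uu_i/\norm{\uu_i}$ falls into a second gap whose near endpoint lies in the right half-plane) requires controlling exactly the borderline regime you admit you cannot yet close. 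The paper does \emph{not} prove a box-shaped invariant; its invariant set $\uset=P\cup\transl_n(P^+)\cup\transl_1(P^-)\cup Q\cup R$ is a considerably more intricate region (with $Q$ and $R$ unbounded), the conclusion $\norm{\uu_i}\le\sqrt2$ is obtained only indirectly by assuming a minimal $i$ with $\la_i>\sqrt2$, locating $\uu_{i-1}$ inside the forbidden set $\forbiddenset$ for the base gap (Case~1) or for a second large gap found via a counting/renumbering argument (Case~2), and contradicting Lemmas~\ref{lemma:proof2helper} and \ref{lemma:proof2disjoint}. That the authors needed this machinery --- in particular the explicit translated regions $\transl_1(P^-)$, $\transl_n(P^+)$, the inequality \eqref{ineq:koppleer}, and the second-gap argument in Case~2 --- is strong evidence that no simple coordinate-box invariant closes by induction on its own. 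As it stands, the $b=2$ case, and hence the upper bound $\uastast{2}{2}\le\sqrt2$ (and, in your division of labour, also $\uastast{2}{1}\le\sqrt2$ by approximation) is not established.
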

Clearly, for $d=2$, $\sxs=\{\xx_1,\xx_2\}$, $\xx_1=(0,1)$,
$\xx_2=(1,0)$ the iteration $\uu_0=0$, $\uu_1=\xx_1$, $\uu_2=\xx_1+\xx_2$
is valid and $\norm{\uu_2}=\sqrt2$. This manifest example represents
one inequality of the proof of Theorem~\ref{thm:main2}; the missing
inequality is shown in Section~\ref{section:proof2}.

\begin{theorem}\label{thm:main3}
For $d\ge 3$ we have $\uastast{d}{b}=\infty$
for all $0\le b\le d$.
\end{theorem}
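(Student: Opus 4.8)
Here is the plan I would follow.

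For $b=0$ there is nothing new: by Theorem~\ref{thm:mmm}(i) every $0$-balanced finite $\sxs\subseteq S^{d-1}$ has $\uast(\sxs)=\infty$, so $\uastast{d}{0}=\infty$. For $1\le b\le d$, Theorem~\ref{thm:mmm}(ii) tells us that no single $b$-balanced set can do the job, so the statement must be reached through a sequence of examples: it suffices to construct, for every $M>0$, a $b$-balanced set $\sxs\subseteq S^{d-1}$ admitting an iteration $(\uu_i)$ with $\norm{\uu_i}>M$ for some $i$.

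I would obtain such an $\sxs$ as a union $\sxs=\sxs^{\mathrm c}\cup\sxs^{\mathrm b}$ of a \emph{divergent core} $\sxs^{\mathrm c}$ and a small \emph{collar} $\sxs^{\mathrm b}$. The core is taken $0$-balanced, so that --- by the argument proving Theorem~\ref{thm:mmm}(i) --- it carries, all by itself, an iteration that escapes to infinity, in fact growing linearly and drifting into some direction $g$; a convenient choice is a suitable (not too narrow) spherical cap together with a few extra unit vectors that steer the drift. The collar $\sxs^{\mathrm b}$ is then added so that $0\in\conv{\sxs^{\mathrm c}\cup\sxs^{\mathrm b}}$, and indeed so that $0$ lies in a face of $\conv{\sxs^{\mathrm c}\cup\sxs^{\mathrm b}}$ of exactly the prescribed dimension $b$ (in its interior when $b=d$); this makes $\sxs$ exactly $b$-balanced. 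The point of the construction is to place $\sxs^{\mathrm b}$ so that it stays \emph{dormant} along a long initial segment of the core iteration --- i.e.\ so that for $i=0,1,\dots,N$ no element of $\sxs^{\mathrm b}$ is the point of $\sxs$ farthest from $\uu_i$. When this holds, the iteration of $\sxs$ coincides with the (linearly growing) iteration of $\sxs^{\mathrm c}$ up to step $N$, hence reaches length of order $N$; and one needs a family of such $\sxs$ for which $N$ can be made arbitrarily large.

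The reason this strategy is available for $d\ge 3$ but not for $d=2$ --- where Theorem~\ref{thm:main2} caps $\uast$ at $\sqrt2$ --- is room to hide the collar. Since $\sxs^{\mathrm c}$ lies in a halfspace and $0\in\conv{\sxs^{\mathrm c}\cup\sxs^{\mathrm b}}$, the collar must reach across that halfspace; once $\uu_i$ has settled into the drift direction $g$, a collar vector pointing ``backward'' relative to $g$ becomes the farthest point almost immediately, and in the plane there is no way to place the collar so as to avoid this. For $d\ge 3$ one has the extra freedom to let the core's drift direction, the normal of its halfspace, and the directions the collar is forced to occupy be mutually skew, and to distribute the collar over the additional coordinate(s) so that, as seen from the relevant $\uu_i$, no collar vector is more backward than the worst core vector until the iteration has already run far --- the crossover step $N$ growing without bound as one shrinks a suitable spread parameter of the core. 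The heart of the proof, and the step I expect to be the real obstacle, is carrying this placement out explicitly: one must simultaneously (i) make $0$ sit in a face of the intended dimension $b$, (ii) keep the collar invisible to the greedy rule for the prescribed number of steps, and (iii) control the short transient of the core iteration before $\uu_i$ aligns with $g$, so that no collar vector is triggered there either. Once these are in hand, estimating $N$ in terms of the construction parameters and letting the parameter tend to its limit should be routine.
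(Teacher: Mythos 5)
Your high-level strategy --- a slowly-drifting $0$-balanced ``core'' $\sxs^{\mathrm c}$, plus a ``collar'' $\sxs^{\mathrm b}$ that supplies the $b$-balancedness yet stays dormant along a long initial segment, with a tuneable spread parameter that lets the dormancy stretch arbitrarily far --- is indeed the mechanism behind the paper's examples, and your treatment of $b=0$ via Theorem~\ref{thm:mmm}(i) is exactly what the paper does. But what you have written is a plan, not a proof: you say yourself that ``carrying this placement out explicitly'' is the real obstacle, and then stop precisely there, so the substance of the argument is missing.

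One concrete pitfall your single unified scheme glosses over: the cleanest way to ``hide'' the collar is to put it in an orthogonal complement, so that $\scp{\uu_i}{x}=0$ for every collar point $x$ while the chosen core point always has strictly negative inner product --- then dormancy is automatic, with no crossover step to estimate. This is exactly what the paper does for $1\le b\le d-2$: it splits $\R^d=\R^b\oplus\R^c$, places a full regular simplex (a $b$-dimensional face through $0$) in $\R^b$ as the collar, and an $\eps$-tilted near-simplex plus a steering point $\xx_0$ in $\R^c$ as the core (this is $B_{d,b}(\eps,\phi)$ of Example~\ref{example:unbounded} and Proposition~\ref{proposition:unbounded}). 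However, this requires $c=d-b\ge 2$ so that $S^{c-1}$ can host a near-simplex with a gap; for $b=d-1$ and $b=d$ the orthogonal complement is too small and the orthogonal-collar route is simply unavailable. The paper therefore uses a separate, non-orthogonal construction $C_d(\eps,\mu,\phi)$ (Example~\ref{example:unbounded2}, Proposition~\ref{proposition:unbounded2}) for those two cases, in which the ``core'' is essentially a nearly-antipodal pair $\xx_1,\xx_{d+1}$ that the iteration ping-pongs between, and the remaining $d-1$ points of the tilted simplex act as a collar that is skew rather than orthogonal, so that its dormancy for $\sim 1/\eps$ steps must be verified by hand from explicit inequalities. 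Your sketch does allow for skew collars and a finite crossover $N$, so it is not wrong in spirit, but it gives no sign of the $b\ge d-1$ threshold where the easy implementation closes; executing the plan naively you would likely get stuck exactly there, and in any case the explicit families and the inequality-chasing that constitute the proof remain to be supplied.
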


\begin{proof}
For any dimension $d$ we have
$\uastast{d}{0}=\infty$ from Theorem~\ref{thm:mmm}~(i).
For $1\le b\le d-2$ the assertion follows
from the example discussed in
Proposition~\ref{proposition:unbounded} below.
For $b=d$ and $b=d-1$ use
Proposition~\ref{proposition:unbounded2}~(ii) and (iii),
respectively.
\end{proof}

Although the balance property of $\sxs$ is a suggesting
geometric property, it does not seem to give a finer prediction for
$\uast(\sxs)$ than $\delta(\sxs)$. In the balanced case,
$0<\delta(\sxs)$ determines a finite upper bound for
$\uast(\sxs)$ as shown in \cite{MMM06}, namely
\[
  \norm{\uu_i} \le \frac{1}{2\delta(\sxs)} + 1,\qquad i\in\NZ.
\]
With respect to the faster convergence we have
an immediate result for $d=2$:
\begin{corollary}
Let $\syg\subseteq\R^2$ be a finite set with at least two elements.
Assume that all elements of $\syg$ lie on the boundary of
$\seb{\syg}$. Then
$\norm{c-c_i} \le \frac{\sqrt2 R}{i}$ for all $i\in\N$.
\end{corollary}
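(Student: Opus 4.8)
The plan is to obtain the statement as an immediate consequence of Theorem~\ref{thm:mmm2} and Theorem~\ref{thm:main2}, so that no new estimate has to be established. Applying Theorem~\ref{thm:mmm2} to $\syg$ and the associated set $\sxg$ of \eqref{eqn:xfromy} gives $\norm{c-c_i}/R\le\uast(\sxg)/i$ for all $i\in\N$; hence it suffices to prove $\uast(\sxg)\le\sqrt2$.

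The first step is to show that $\sxg$ is a finite subset of the unit circle with $\seb{\sxg}=S^{1}$. Consider the similarity $\transl\colon\R^2\to\R^2$, $\transl(z)=Rz+c$; it maps $S^{1}$ onto the circle of radius $R$ about $c$ and the closed unit ball onto $\seb{\syg}$, and being a homeomorphism it takes boundaries to boundaries. Since a similarity scales all distances by the fixed factor $R$, it carries smallest enclosing balls to smallest enclosing balls, so $\seb{\sxg}=\seb{\transl^{-1}(\syg)}=\transl^{-1}(\seb{\syg})$ is the closed unit ball, i.e.\ $\seb{\sxg}=S^{1}$. Moreover, by hypothesis every element of $\syg$ lies on the boundary of $\seb{\syg}$, hence every element of $\sxg=\transl^{-1}(\syg)$ lies on $S^{1}$. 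Proposition~\ref{prop:zerobal}, applied to $\sxg$, now yields $0\in\conv{\sxg}$.

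Thus $\sxg$ is not $0$-balanced. If $\sxg$ spans $\R^2$, it is $b$-balanced for some $b\in\{1,2\}$, and Theorem~\ref{thm:main2} gives $\uast(\sxg)\le\uastast{2}{b}=\sqrt2$; here there is no clash with the extended definition of $\uast$ in Theorem~\ref{thm:mmm2}, because $\sxg$ lies exactly on $S^{1}$. If instead $\sxg$ fails to span $\R^2$, then, being contained in $S^{1}$ with $0\in\conv{\sxg}$, it equals $\{e,-e\}$ for some unit vector $e$; the iteration then runs $\uu_0=0$, $\uu_1=\pm e$, $\uu_2=0$ and repeats, so $\uast(\sxg)=1\le\sqrt2$. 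In all cases $\uast(\sxg)\le\sqrt2$, and feeding this into the inequality of Theorem~\ref{thm:mmm2} gives $\norm{c-c_i}\le\sqrt2\,R/i$ for all $i\in\N$, as claimed.

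I do not expect a genuine obstacle: the proof is a chain of citations. The only places that need a moment's care are the use of the ``all points on the boundary'' hypothesis to place $\sxg$ on $S^{1}$ rather than inside it---this is exactly what makes the sphere-only quantity $\uastast{2}{b}$ applicable---and the separate treatment of the one-dimensional, collinear configuration, which is handled by direct inspection.
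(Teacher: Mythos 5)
Your proposal is correct and is essentially the argument the paper intends: the corollary is stated in the paper with no written proof, introduced as ``an immediate result'' following Theorem~\ref{thm:mmm2} and Theorem~\ref{thm:main2}, and your chain of citations (plus the observation that the boundary hypothesis forces $\sxg\subseteq S^1$ and the side remark on the non-spanning, two-point configuration) fills in exactly the details the paper leaves implicit. The only cosmetic point is that you reuse the symbol $\transl$ for the similarity $z\mapsto Rz+c$, whereas the paper has already reserved $\transl_k$ for translation by $\xx_k$; pick a fresh letter to avoid a clash.
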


\section{Proof for $d=2$}
\label{section:proof2}

%
%
Let $\bv1$, $\bv2$ denote the canonical orthonormal basis of $\R^2$.
Each $\xx_j\in X$, $1\le j\le n$ can be written as
\[
 \xx_j = \cos(\xw_j) \, \bv{1} + \sin(\xw_j) \, \bv{2} =
   \polar{1}{\xw_j},
\]
where $\polar{\tilde r}{\tilde\phi}$ indicates a point in standard
polar coordinates on $\R^2$. Similarly, for $j\in\N$ we write
\begin{align*}
 \cc_j &= \cos(\cw_j) \, \bv{1} + \sin(\cw_j) \, \bv{2} =
   \polar{1}{\cw_j},\\
 \uu_j &= \la_j\bigl(\cos(\uw_j) \, \bv{1} + \sin(\uw_j) \, \bv{2}
   \bigr) = \polar{\la_j}{\uw_j}.
\end{align*}
All argument angles are real numbers taken modulo $2\pi$.
The freedom in rotation is fixed as follows.
Assume that $\xx_1,\dots,\xx_n$ are numbered
counterclockwise, starting at $\xw_1=2\pi-\fee$,
ending at $\xw_n=\pi+\fee$, such that there is a
gap with angle size $\pi-2\fee$
between the two neighboring elements $\xx_1$, $\xx_n$
of $X$ is symmetric about the $\bv{2}$-axis.
We call this a parametrization of $X$ with base gap
of size $\pi-2\fee$, where $\fee\in[0,\frac\pi2)$.
The choice of $\fee$ indicates that we restrict to the balanced
cases. Define $\feebar:=\frac{\pi}{6}-\fee$.
For $W\subseteq\R^2$ and $k=1,\dots,n$ let
$\transl_k(W)$ denote the set obtained by translation
of $W$ by $\xx_k$.
The set $\forbiddenset$ is defined by
\[
  \forbiddenset := \Bigl\{
    \polar{\tilde r}{\tilde\phi} \in\R^2 \,\,\bigl|\bigr.\,\,
    \text{$\tilde r\in(1,\sqrt2\,]$ and
    $\tilde\phi\in \Bigl(
      \frac\pi2 - \feebar, \frac\pi2 + \feebar \Bigr)$} \Bigr\}.
\]
Moreover, we define three subsets of $\R^2$ by
\begin{align*}
  R &:= \{ \polar{\tilde r}{\tilde\phi} \,\,|\,\,
     \text{$\tilde r>0$ and
     $\tilde\phi \in (\pi - \fee, 2\pi + \fee)$} \},\\
  Q &:= \{ \cartesian{a,b} \,\,|\,\,
      |a|\tan\fee \le b \le |a|\tan\fee + \lambda_{min}\},\\
  P &:= \{ u\in\R^2 \,|\, \norm{u}\le1 \} \setminus (R \cup Q).
\end{align*}
Here $\lambda_{min} := \frac{\sqrt3}{2\cos\fee}$ is the length
of the intersection of $Q$ with the $\bv{2}$-axis.
Figure~\ref{fig:proof2} gives an illustration of this situation;
\cite{FIG} gives an animated version where $\fee$ varies in time.

\begin{center}
\begin{figure}
%
%
%
%
\newcommand{\dimtwopicture}[2]{

\psset{xunit=#1,yunit=#1,runit=#1}
\begin{pspicture}(-14,-11)(14,15)
\SpecialCoor 
\FProot{\roottwo}{2}{2}
\FProot{\rootthree}{3}{2}
\FPset{\cradius}{10.0}
\FPmul{\cradiusl}{\cradius}{\roottwo}
\FPset{\phee}{#2}
\FPdiv{\pheebytwo}{\phee}{2}
\FPmul{\pheerad}{\phee}{\FPpi}
\FPdiv{\pheerad}{\pheerad}{180}
\FPsub{\pheebar}{30}{\phee}
\FPadd{\pheebarl}{90}{\pheebar}
\FPsub{\pheebarr}{90}{\pheebar}
\FPsub{\xonearg}{360}{\phee}
\FPadd{\xennarg}{180}{\phee}
\FPsub{\xoneopp}{180}{\phee}
\FPadd{\xennopp}{000}{\phee}
\FPcos{\lambdamin}{\pheerad}
\FPdiv{\lambdamin}{\rootthree}{\lambdamin}
\FPmul{\lambdamin}{\lambdamin}{5} 
\FPsin{\lambdamsp}{\pheerad}
\FPmul{\lambdamsp}{\lambdamsp}{\cradius}
\FPsub{\lambdamsp}{\lambdamin}{\lambdamsp}

\FPround\pheebytwo\pheebytwo6
\FPround\pheebarl\pheebarl6
\FPround\pheebarr\pheebarr6
\FPround\xonearg\xonearg6
\FPround\xennarg\xennarg6
\FPround\xoneopp\xoneopp6
\FPround\xennopp\xennopp6
\FPround\lambdamin\lambdamin6
\FPround\lambdamsp\lambdamsp6

\psset{arrows=-,arrowlength=4,arrowinset=0}
%
%
\psline(-14, 0)(14,0)
\psline( 0,-11)(0,15)
\psline[arrows=->](0,0)(\cradius;0)
\psline[arrows=->](0,0)(\cradius;90)
\rput{0}(10.8,-0.7){$\bv{1}$}
\rput{0}( 0.7,10.5){$\bv{2}$}
\pscircle[linecolor=red](0,0){\cradius}
\psline[arrows=->](0,0)(\cradius;\xonearg) 
\rput{0}(11.0;\xonearg){$\xx_1$}
\psline[arrows=->](0,0)(\cradius;\xennarg) 
\rput{0}(11.0;\xennarg){$\xx_n$}
%
\psline[linestyle=dashed](0,0)(14;\xennopp)
\psline[linestyle=dashed,origin={0,-\lambdamin}](0,0)(14;\xennopp)
\psline[linestyle=dashed](0,0)(14;\xoneopp)
\psline[linestyle=dashed,origin={0,-\lambdamin}](0,0)(14;\xoneopp)
%
%
\psline[linestyle=dotted](0,0)(\cradius;\pheebarr)
\psline[linestyle=dotted](0,0)(\cradius;\pheebarl)
%
%
\psline[](\cradius;\pheebarr)(\cradiusl;\pheebarr)
\psline[](\cradius;\pheebarl)(\cradiusl;\pheebarl)
\psarc[](0,0){\cradiusl}{\pheebarr}{\pheebarl}
%
%
\begin{psclip}{
\pspolygon[arrows=-,linestyle=none]%
(\cradius;\pheebarr)(\lambdamin;90)%
(\cradius;\pheebarl)(\cradiusl;90)}
\pswedge[fillstyle=hlines](0,0){\cradius}{\pheebarr}{\pheebarl}
\end{psclip}
%
%
\pswedge[linestyle=none](0,0){\cradius}{\xennopp}{\pheebarr}
\begin{psclip}{\pspolygon[linestyle=none,origin={\cradius;\xennopp}]%
(\cradius;\pheebarr)(\lambdamin;90)(\cradiusl;90)}
\pswedge[fillstyle=hlines](\cradius;\xennarg){\cradius}{\pheebarr}{90}
\end{psclip}
\psline[origin={\cradius;\xennopp}](\cradius;\pheebarr)(\lambdamin;90)
\psline[linestyle=dotted](\cradius;\xennarg)%
(\cradius;\xennarg|0,\lambdamsp)
\psline[linestyle=dotted](\cradius;\xennarg|0,\lambdamsp)(\lambdamin;90)
%
%
\pswedge[linestyle=none](0,0){\cradius}{\xoneopp}{\pheebarl}
\begin{psclip}{\pspolygon[linestyle=none,origin={\cradius;\xoneopp}]%
(\cradius;\pheebarl)(\lambdamin;90)(\cradiusl;90)}
\pswedge[fillstyle=hlines](\cradius;\xonearg){\cradius}{90}{\pheebarl}
\end{psclip}
\psline[origin={\cradius;\xoneopp}](\cradius;\pheebarl)(\lambdamin;90)
\psline[linestyle=dotted](\cradius;\xonearg)%
(\cradius;\xonearg|0,\lambdamsp)
\psline[linestyle=dotted](\cradius;\xonearg|0,\lambdamsp)(\lambdamin;90)
%
%
\rput{0}(\cradius; 37){$\transl_1(P^-)$}
\rput{0}(\cradius;143){$\transl_n(P^+)$}
\rput{0}( 1.5,12.4){$T$}
\rput{0}( 0.5, 7.6){$P=P^+\cup P^-$}
\rput{0}(12.0, 6.4){$Q$}
\rput{0}(12.0,-6.4){$R$}
%
%
\psarc[](0,0){6.0}{0}{\xennopp}
\psarc[](0,0){6.0}{\xoneopp}{180}
\psarc[](0,0){6.3}{180}{\xennarg}
\psarc[](0,0){6.3}{\xonearg}{360}
\FPset{\tmp}{\pheebytwo}
\FPround\tmp\tmp6
\rput{0}(5.0;\tmp){$\fee$}
\FPneg{\tmp}{\pheebytwo}
\FPround\tmp\tmp6
\rput{0}(5.0;\tmp){$\fee$}
\FPadd{\tmp}{180}{\pheebytwo}
\FPround\tmp\tmp6
\rput{0}(5.0;\tmp){$\fee$}
\FPsub{\tmp}{180}{\pheebytwo}
\FPround\tmp\tmp6
\rput{0}(5.0;\tmp){$\fee$}
\psarc[](0,0){6.3}{\pheebarr}{90}
\psarc[](0,0){6.0}{90}{\pheebarl}
\FPsub{\tmp}{90}{\pheebarr}
\FPdiv{\tmp}{\tmp}{2}
\FPsub{\tmp}{90}{\tmp}
\FPround\tmp\tmp6
\rput{0}(5.0;\tmp){$\feebar$}
\FPsub{\tmp}{\pheebarl}{90}
\FPdiv{\tmp}{\tmp}{2}
\FPadd{\tmp}{90}{\tmp}
\FPround\tmp\tmp6
\rput{0}(5.0;\tmp){$\feebar$}
\end{pspicture}
}
\caption{\label{fig:proof2}
An arbitrary set $\sxs\subseteq S^1\subseteq\R^2$ given
in base gap parametrization. Only $\xx_1$ and $\xx_n$ are displayed,
the remaining elements of $\sxs$ are above $\xx_1$ and $\xx_n$.
Recall that $\fee+\feebar=\frac\pi6$.
$R$ is the open set bounded from above by the lower dashed
lines. $Q$ is the closed set between the dashed lines. The
set $P$ is given by the central hatched area. For small values
of $\fee$,
$\transl_1 (P^-) \setminus(Q\cup R)$ and
$\transl_n (P^+) \setminus(Q\cup R)$ are nonempty.
}
\vspace*{1.5ex}

\dimtwopicture{0.35cm}{10.0}
\end{figure}
\end{center}

\begin{lemma}\label{lemma:proof2helper}
Let $\sxs$ be a finite subset of $S^1\subseteq\R^2$, parametrized
as above. Suppose that $\fee\in[0,\frac{\pi}{6})$, i.e.\ the
size of the base gap is greater than $\frac23\pi$. 
Define the set $\uset$ by
\[
  \uset := P \,\cup\, \transl_n(P^+) \,\cup\, \transl_1(P^-)
  \,\cup\, Q \,\cup\, R,
\]
where $P^+$, $P^-$ denote the elements of $P$ with
non-negative and non-positive $\bv{1}$-coordinate, respectively.
Then $\uu_j\in\uset$ for all $j\in\NZ$.
\end{lemma}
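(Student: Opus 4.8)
The plan is to prove by induction on $j$ that $\uu_j\in\uset$. The base case is immediate, since $\uu_0=0$ lies in $Q$ (put $\cartesian{a,b}=\cartesian{0,0}$: $0\le0\le\lambda_{min}$), hence in $\uset$. For the inductive step I assume $\uu_j\in\uset$, identify the farthest point $\cc_j\in\sxs$ from $\uu_j$, and check that $\uu_{j+1}=\uu_j+\cc_j$ again lies in $\uset$, distinguishing cases according to which of the five pieces of $\uset$ contains $\uu_j$. The tool for locating $\cc_j$ is elementary: if $\uu_j=\polar{\la}{\theta}$ with $\la>0$, then $\scp{x}{\uu_j}=\la\cos(\arg x-\theta)$, so $\cc_j$ is the element of $\sxs$ whose argument is closest to $\theta+\pi$ modulo $2\pi$. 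Since the points of $\sxs$ fill the arc from $2\pi-\fee$ to $\pi+\fee$ and avoid the open base gap $\cartesian{\pi+\fee,2\pi-\fee}$, whose endpoints carry $\xx_1$ and $\xx_n$, it follows that whenever $\theta$ lies in the ``narrow upward wedge'' $[\fee,\pi-\fee]$ the antipode $\theta+\pi$ falls in the base gap, so $\cc_j=\xx_n$ if $\theta\le\tfrac\pi2$ and $\cc_j=\xx_1$ if $\theta\ge\tfrac\pi2$ (either choice in the tie $\theta=\tfrac\pi2$). I will also use that $\fee<\tfrac\pi6$ makes the base gap at most $\pi$, so that $0\in\conv\sxs$ and $\scp{\cc_j}{\uu_j}\le0$, and that $\lambda_{min}=\tfrac{\sqrt3}{2\cos\fee}<1$.

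First the two cases where $\uu_j$ sits in the narrow wedge. If $\uu_j\in P$: the elements of $P^+$ have argument in $[\fee,\tfrac\pi2]$ and those of $P^-$ in $[\tfrac\pi2,\pi-\fee]$, so the dichotomy gives $\cc_j=\xx_n$ on $P^+$ and $\cc_j=\xx_1$ on $P^-$; hence $\uu_j\in P^+$ forces $\uu_{j+1}=\transl_n(\uu_j)\in\transl_n(P^+)\subseteq\uset$, and symmetrically for $P^-$. If $\uu_j\in Q$: every point of $Q$ again has argument in $[\fee,\pi-\fee]$ (the lower boundary of $Q$ is made of the two rays of argument $\fee$ and $\pi-\fee$), so the same dichotomy applies, with $\cc_j=\xx_n$ on the part of $Q$ with non-negative first coordinate and $\cc_j=\xx_1$ on the other part, and it remains to check $\transl_n(Q\cap\{\text{first coordinate}\ge0\})\subseteq\uset$ together with its mirror image. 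A short computation — translation by $\xx_n$ carries the lower boundary ray of that half-strip onto the line through the origin of argument $\fee$ — shows the image lies in $Q\cup R$.

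Next the case $\uu_j\in\transl_n(P^+)\cup\transl_1(P^-)$; by the reflection symmetry of $\uset$ and of the wedge dichotomy across the $\bv2$-axis it suffices to treat $\uu_j=\transl_n(p)$ with $p\in P^+$. One first checks that $\arg\uu_j\in[\tfrac\pi2,\pi-\fee]$ for every such $p$ (the extremal $p$ being those on the lower and right boundary of $P^+$; here the inequality $\lambda_{min}\ge2\sin\fee$, equivalently $\fee\le\tfrac\pi6$, is used), so that $\cc_j=\xx_1$ and therefore $\uu_{j+1}=p+\xx_n+\xx_1=p+\cartesian{0,-2\sin\fee}$; since $b>a\tan\fee+\lambda_{min}\ge a\tan\fee+2\sin\fee$ on $P^+$, shifting $P^+$ down by $2\sin\fee$ keeps it inside the unit disk and in the narrow wedge above the lower boundary of $Q$, whence $\uu_{j+1}\in P\cup Q\subseteq\uset$. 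Finally, the case $\uu_j\in R$: here $\arg\uu_j\in\cartesian{\pi-\fee,2\pi+\fee}$, so $-\uu_j$ points into the arc occupied by $\sxs$; if the closed unit disk centered at $\uu_j$ is contained in $R$, then $\uu_{j+1}\in R$ and we are done. Otherwise $\uu_j$ lies within distance $1$ of the complementary wedge, and I use the sharper information that $\cc_j$ is roughly antipodal to $\uu_j$ with $\arg\cc_j\in[2\pi-\fee,\pi+\fee]$ — so $\cc_j\in\{\xx_1,\xx_n\}$ when $\uu_j$ is near a boundary ray of $R$ — to show $\uu_{j+1}\in R\cup Q\cup P$, two useful auxiliary facts being that the closed disk of radius $\lambda_{min}$ about the origin lies in $Q\cup R$, and that translating a point of $R$ near a boundary ray by the extreme point of $\sxs$ roughly antipodal to that ray moves it essentially along the ray toward the origin, hence into $Q$ or back into $R$.

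The hard part is the case $\uu_j\in R$. In all other cases $\cc_j$ is pinned to $\xx_1$ or $\xx_n$ and the verification collapses to comparing two explicit translates of polygonal or circular regions. In the $R$ case $\cc_j$ may be any of the interior points of $\sxs$, so only its length and a range for its argument are available, and one must rule out every way it could carry $\uu_j$ out of $R\cup Q\cup P$; this is exactly the bookkeeping for which the width $\lambda_{min}=\tfrac{\sqrt3}{2\cos\fee}$ of the buffer strip $Q$ is calibrated, and where the bound $\fee<\tfrac\pi6$ (base gap larger than $\tfrac23\pi$) is used essentially — for a narrower gap $Q$ would be too thin to hold the iterate immediately after it leaves $R$.
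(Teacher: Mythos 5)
The skeleton of your induction — base case $\uu_0=0$, inductive step split into five cases according to which piece of $\uset$ contains $\uu_j$ — matches the paper's own proof, and your treatments of the pieces $P$, $Q$, $\transl_n(P^+)$, $\transl_1(P^-)$ are essentially correct and in the same spirit as the paper's (the paper even labels these (a), (b), (d), (e) and dispatches them quickly). The genuine gap is in the case $\uu_j\in R$, which you yourself single out as ``the hard part'' and then leave as a sketch rather than a proof.

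Two concrete problems. First, your claim that ``$\cc_j\in\{\xx_1,\xx_n\}$ when $\uu_j$ is near a boundary ray of $R$'' is false in general. If $\uw_j$ is close to $0$ (near the ray at argument $\fee$), the farthest point from $\uu_j$ is whichever $\xx_k\in\sxs$ has argument nearest to $\uw_j+\pi\approx\pi$, and nothing forces that to be $\xx_1$ or $\xx_n$ — the set $\sxs$ may have many points with argument near $\pi$. The paper handles this by letting the argument $\cw:=\cw_j$ of $\cc_j$ range over the whole interval $[\pi+2\uw-\fee,\pi+\fee]$ and carrying $\cw$ as a free variable through the estimates. (Indeed you contradict yourself a few lines later by observing that ``$\cc_j$ may be any of the interior points of $\sxs$.'') Second, the statement ``translating a point of $R$ near a boundary ray by the extreme point of $\sxs$ roughly antipodal to that ray moves it essentially along the ray toward the origin, hence into $Q$ or back into $R$'' is not an argument; it is precisely the assertion to be established, and it is not obviously true without a quantitative input. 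The paper supplies that input in two pieces: when $\uu_{j+1}$ stays on the same side ($\bv1$-coordinate $\ge 0$), an expansion of $\la\sin(\fee-\uw)+\sin(\fee-\cw)+\frac{\sqrt3}{2}>0$ shows $\uu_{j+1}$ falls below the upper boundary of $Q$; when $\uu_{j+1}$ crosses to the other side, the key inequality $\frac{\cos\cw}{\cos\uw}\ge 2\cos(\cw-\uw)$ forces $\normsqr{\uu_{j+1}}=1+\la^2+2\la\cos(\cw-\uw)\le 1$, putting $\uu_{j+1}$ back in the unit disk. Nothing in your sketch plays the role of either estimate, so the case $\uu_j\in R$ remains open in your write-up.
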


\begin{proof}
Clearly $\uu_0\in\uset$.
By induction, assume that $\uu_j\in\uset$ for some $j\in\N$.
The proof is complete if all of the following claims are shown to be
true. 
\begin{enumerate}[(a)]
\item If $\uu_j\in Q$, then $\uu_{j+1}\in Q\cup R$.
\item If $\uu_j\in P$, then
$\uu_{j+1}\in\transl_n(P^+) \,\cup\, \transl_1(P^-)$.
\item If $\uu_j\in R$, then
$\uu_{j+1}\in P\cup Q\cup R$.
\item If $\uu_j\in \transl_n(P^+)$, then
$\uu_{j+1}\in P\cup Q\cup R$.
\item If $\uu_j\in \transl_1(P^-)$, then
$\uu_{j+1}\in P\cup Q\cup R$.
\end{enumerate}
If $\uu_j\in P\cup Q$, then $\xx_1$ or $\xx_n$ is chosen
in the next step of the iteration,
i.e.\ $\cc_j \in \{\xx_1,\xx_n\}$. Therefore, (b) is trivial.
Also (a) is true since $\transl_1(Q)$ and $\transl_n(Q)$
have no parts above $Q$. If (d) is true then (e) holds
by symmetry. Hence it suffices to show (c) and (d).

\textbf{Claim (c).} Suppose that $\uu_j\in R$ is arbitrarily fixed.
If $\uw_j\in(\pi+\fee,2\pi-\fee)$, then
from Figure~\ref{fig:proof2} it is clear that translation
of the part of $R$ with such argument $\uw_j$
by an arbitrary unit vector stays inside $P\cup Q\cup R$.

Otherwise, $\uw_j\in[-\fee,\fee)$ or $\uw_j\in(\pi-\fee,\pi+\fee]$,
where the second part follows from the first by symmetry.
Restricting to $\uw:=\uw_j\in[-\fee,\fee)$ and setting
$\la:=\la_j>0$, $\cw:=\cw_j\in[\pi+2\uw-\fee,\pi+\fee]$ we
can write
\[
  \uu_{j+1} =
    (\la\cos\uw+\cos\cw)\bv{1} +
    (\la\sin\uw+\sin\cw)\bv{2}.
\]
The range of $\cw$ follows since the center of the interval
of possible values for $\cw$ is $\uw+\pi$, it extends by
$\pi+\fee-(\uw+\pi)=\fee-\uw$ to both sides.
We continue to work on two cases.
\begin{enumerate}[(c.i)]
\item The $\bv{1}$-coordinate of $\uu_{j+1}$ is non-negative.
In this case $\sin(\cw-\fee)\le\frac{\sqrt3}{2}$ and
$\la\sin(\fee-\uw)\ge0$. Since equality does not hold
simultaneously,
\[
  0 < \la\sin(\fee-\uw) + \sin(\fee-\cw) + \frac{\sqrt3}{2}.
\]
Expanding and rearranging the trigonometric terms, substituting
$\la_{min} = \frac{\sqrt3}{2\cos\fee}$ (which denotes the
length of the intersection of $Q$ with the $\bv{2}$-axis) and
dividing by $\cos\fee>0$ we get
\[
  (\la\sin\uw+\sin\cw) - \la_{min} <
    \tan\fee\,(\la\cos\uw+\cos\cw).
\]
This shows that $\uu_{j+1}$ falls below the line bounding $Q$
from above. Hence $\uu_{j+1}\in Q\cup R$.
\item The $\bv{1}$-coordinate of $\uu_{j+1}$ is negative,
i.e.\ $\la<-\frac{\cos\cw}{\cos\uw}$.
If we knew the inequality
\begin{equation}\label{ineq:koppleer}
  \frac{\cos\cw}{\cos\uw} \ge 2\cos(\cw-\uw),
\end{equation}
then $\la\le-2\cos(\cw-\uw)$ would follow
using the inequality for $\la$.
We would arrive at
\[
  \normsqr{\uu_{j+1}} = 1 + \la^2 + 2\la\cos(\cw-\uw) \le 1,
\]
which would show that $\uu_{j+1}\in P\,\cup\,Q\,\cup\,R$.
Hence we are left with \eqref{ineq:koppleer}.
First consider the case $\uw\ge0$.
Then $2\cos(\cw-\uw)<-\sqrt{3}$ and
\[
  \frac{\cos\cw}{\cos\uw} \ge -\frac{1}{\cos\uw} >
  -\frac{2}{\sqrt3},
\]
hence \eqref{ineq:koppleer} is true for this case.
Now restrict to the
case when $\uw<0$. Then $2\cos(\cw-\uw)<-1$ and
\[
  \frac{\cos\cw}{\cos\uw} \ge
  -\frac{\cos(\pi+2\uw-\fee)}{\cos\uw} > -1,
\]
hence \eqref{ineq:koppleer} is true.
\end{enumerate}

\textbf{Claim (d).} From the assumption there is some
$v=\polar{\la}{\delta}\in P^+$ with
$\frac{\sqrt3}{2\sin(\delta-\fee)}\le\la\le1$ and
$\delta\in[\frac\pi2 - \feebar,\frac\pi2]$ such that
\[
  \uu_j = \transl_n v =
    (\la\cos\delta - \cos\fee) \bv{1} +
    (\la\sin\delta - \sin\fee) \bv{2}.
\]
We are done if we show that $\xx_1$ is chosen for the
next step of the iteration, i.e.\ $\cc_j=\xx_1$.
In this case
\[
  \uu_{j+1} = \la\cos\delta \bv{1} +
     (\la\sin\delta-2\sin\fee) \bv{2}.
\]
$\uu_{j+1}$ has a smaller $\bv{2}$-coordinate than the
original point $v\in P^+$, hence
$\uu_{j+1}\in R\cup Q\cup P^+$. We are left with the
mentioned claim and show that the argument angle $\uw_j$
of $\uu_j$ satisfies $\uw_j\le\pi-\fee$.
From
\[
   \la\sin(\fee+\delta) \ge
   \frac{\sqrt3}{2} \frac{\sin(\fee+\delta)}{\sin(\delta-\fee)} \ge
   \frac{\sqrt3}{2} > \sin 2\fee
\]
we get
\[
   (\la\cos\delta-\cos\phi)\sin\phi \ge
   -\cos\fee (\la\sin\delta - \sin\fee).
\]
Since $\la\sin\delta - \sin\fee > 0$ and $\sin\fee\ge0$
division by these terms does not change the type of inequality.
We obtain
\[
   \cot\uw_j =
   \frac{\la\cos\delta-\cos\phi}{\la\sin\delta - \sin\fee} \ge
   -\cot\fee = \cot(\pi-\fee),
\]
which proves the desired fact.
\end{proof}

\begin{lemma}\label{lemma:proof2disjoint}
In the situation of Lemma~\ref{lemma:proof2helper} we have
$\uset\cap \forbiddenset = \emptyset$.
\end{lemma}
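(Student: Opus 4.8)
The plan is to check disjointness of $\forbiddenset$ with each of the five pieces in $\uset=P\cup Q\cup R\cup\transl_n(P^+)\cup\transl_1(P^-)$ in turn. Two are immediate. Every $u\in P$ satisfies $\norm u\le1$, whereas every point of $\forbiddenset$ has norm strictly greater than $1$, so $P\cap\forbiddenset=\emptyset$. And the argument angles occurring in $R$ fill the arc $(\pi-\fee,2\pi+\fee)\pmod{2\pi}$, whose complementary arc $[\fee,\pi-\fee]$ contains the argument-angle interval $(\tfrac\pi2-\feebar,\tfrac\pi2+\feebar)=(\tfrac\pi3+\fee,\tfrac{2\pi}3-\fee)$ of $\forbiddenset$ (here $0\le\fee<\tfrac\pi6$ is used); hence $R\cap\forbiddenset=\emptyset$ for purely angular reasons.

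For $Q$ I would show that the upper defining inequality $b\le|a|\tan\fee+\lambda_{min}$ is violated at every point of $\forbiddenset$. Writing such a point as $\polar{\tilde r}{\tilde\phi}$ with $a=\tilde r\cos\tilde\phi$, $b=\tilde r\sin\tilde\phi$ (so $b>0$ since $\tilde\phi\in(\tfrac\pi3,\tfrac{2\pi}3)$), the identity $\sin\tilde\phi\mp\cos\tilde\phi\tan\fee=\sin(\tilde\phi\mp\fee)/\cos\fee$ gives
\[
  b-|a|\tan\fee=\tilde r\bigl(\sin\tilde\phi-|\cos\tilde\phi|\tan\fee\bigr)=\frac{\tilde r\,\sin(\tilde\phi\mp\fee)}{\cos\fee},
\]
where one takes $\sin(\tilde\phi-\fee)$ when $\tilde\phi\le\tfrac\pi2$ and $\sin(\tilde\phi+\fee)$ when $\tilde\phi\ge\tfrac\pi2$. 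In either case $\tilde\phi\mp\fee$ lies strictly between $\tfrac\pi3$ and $\tfrac{2\pi}3$ (using $\fee+\feebar=\tfrac\pi6$), so $\sin(\tilde\phi\mp\fee)>\tfrac{\sqrt3}2$, and together with $\tilde r>1$ this yields $b-|a|\tan\fee>\tfrac{\sqrt3}{2\cos\fee}=\lambda_{min}$. So $\forbiddenset\cap Q=\emptyset$.

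It remains to treat $\transl_n(P^+)$; then $\transl_1(P^-)$ follows by the reflection symmetry in the $\bv2$-axis, which fixes $\forbiddenset$ and interchanges $P^+\leftrightarrow P^-$ and $\xx_1\leftrightarrow\xx_n$. Here I would separate the two sets by the line through the origin with normal $w:=\polar1{\feebar}$. On one side, for $p=\polar{\tilde r}{\tilde\phi}\in\forbiddenset$ the angle $\tilde\phi-\feebar$ lies in $(\tfrac\pi2-2\feebar,\tfrac\pi2)\subseteq(0,\tfrac\pi2)$, so $\scp pw=\tilde r\cos(\tilde\phi-\feebar)>0$, i.e.\ $\forbiddenset\subseteq\{u:\scp uw>0\}$. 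On the other side, a point of $\transl_n(P^+)$ has the form $v+\xx_n$ with $v=\polar\la\delta\in P^+$, so $\la\le1$ and $\delta\in[\tfrac\pi2-\feebar,\tfrac\pi2]$ (as recorded in the proof of Lemma~\ref{lemma:proof2helper}), whence $\delta-\feebar\in[\tfrac\pi6,\tfrac\pi2)$ and, since $\xx_n=\polar1{\pi+\fee}$,
\[
  \scp{v+\xx_n}{w}=\la\cos(\delta-\feebar)-\cos(\feebar-\fee)\le\sin2\feebar-\cos(\feebar-\fee)\le\tfrac{\sqrt3}2-\tfrac{\sqrt3}2=0,
\]
the last two estimates using $2\feebar\le\tfrac\pi3$ and $|\feebar-\fee|=|\tfrac\pi6-2\fee|\le\tfrac\pi6$. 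Hence $\transl_n(P^+)\subseteq\{u:\scp uw\le0\}$ is disjoint from $\forbiddenset$, which completes the argument.

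The step I expect to be the main obstacle is this last one. Unlike $P$, $Q$, $R$, the translated caps are not ruled out by a norm bound or by a sign condition on a single coordinate, so one has to find the right separating direction — the vector $\polar1{\feebar}$ works because $P^+$ is, up to removing the band $Q$, the slice of the unit disk between the angles $\tfrac\pi2-\feebar$ and $\tfrac\pi2$ — and then push through the two trigonometric estimates $\sin2\feebar\le\tfrac{\sqrt3}2$ and $\cos(\feebar-\fee)\ge\tfrac{\sqrt3}2$, both of which rest on the standing assumption $\fee\in[0,\tfrac\pi6)$, equivalently $\feebar\in(0,\tfrac\pi6]$. Everything else reduces to bookkeeping with the polar-coordinate descriptions of the five pieces and the relation $\fee+\feebar=\tfrac\pi6$.
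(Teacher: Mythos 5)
Your proof is correct and follows essentially the same route as the paper's. The paper dismisses $(P\cup Q\cup R)\cap\forbiddenset=\emptyset$ as ``by construction'' while you spell the checks out, and for $\transl_n(P^+)$ you recast the paper's chain $\la\cos(\delta-\feebar)\le\cos(\delta-\feebar)\le\tfrac{\sqrt3}{2}\le\cos(\fee-\feebar)$ (which the paper turns into $\cot\arg\transl_n u\le\cot(\tfrac\pi2+\feebar)$) as a linear separation by the line normal to $\polar{1}{\feebar}$ --- a mathematically equivalent but slightly cleaner statement, since it avoids dividing by the $\bv{2}$-coordinate $\la\sin\delta-\sin\fee$ and thus the need to check its sign.
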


\begin{proof}
By construction
$(P \cup Q\cup R)\cap \forbiddenset=\emptyset$. By symmetry it is
therefore enough to show that
$\transl_n(P^+) \cap \forbiddenset = \emptyset$.
As before, let $u=\polar{\la}{\delta}\in P^+$, where
$\delta\in[\frac\pi2 - \feebar,\frac\pi2]$ and
$\frac{\sqrt3}{2\sin(\delta-\fee)}\le \la \le 1$.
Then
\[
  \transl_n u =
     (\la\cos\delta - \cos\fee)\bv{1} +
     (\la\sin\delta - \sin\fee)\bv{2}.
\]
Starting with
\[
  \la\cos(\delta-\feebar) \le \cos(\delta-\feebar) \le
  \frac{\sqrt{3}}{2} \le \cos(\fee-\feebar),
\]
expanding and dividing by $\la\sin\delta-\sin\fee>0$ and
by $\cos\feebar>0$ we get
\[
  \cot\arg\transl_n u =
  \frac{\la\cos\delta - \cos\fee}{\la\sin\delta - \sin\fee}
  \le
  -\tan\feebar = \cot\Bigl(\frac\pi2+\feebar\Bigr),
\]
which shows that the argument angle of $\transl_n u$
is greater or equal than $\frac\pi2+\feebar$.
Therefore $\transl_n u\not\in \forbiddenset$, which proves the assertion.
\end{proof}

\begin{proof}[Proof of Theorem~\ref{thm:main2}.]
Again, the set $A_{2,1}$ from Example~\ref{example:perp} below shows
that $\uastast{2}{1}\ge\sqrt2$.
Moving $\bv{1}$ slightly away from $\bv{2}$ turns $A_{2,1}$ into
a balanced set and shows that also $\uastast{2}{2}\ge\sqrt2$.
Hence it suffices to prove
$\uastast{2}{1},\uastast{2}{2}\le\sqrt2$.
Contrarily, we assume that
there exists an iteration such that $\la_i>\sqrt2$
for some fixed $i\in\N$. Without loss of generality we may assume
that $i$ is the smallest such index, in particular
$\la_{i-1}\le\sqrt2$.

The angle $\gamma_j\in[0,\pi]$ between $\uu_j$ and $\cc_j$ is
defined for all $j\in\N$ since without loss of generality we
may assume $\uu_j\not=0$. Now observe that
\[
  \frac{\pi}{2} + \fee = \frac12(2\pi - (\pi - 2\fee))
    \le\gamma_j\le\pi
\]
for all $j\in\N$.
A simple computation yields
\begin{equation}\label{eqn:lq}
  \la_j^2 = 1 + 2\la_{j-1} \cos\gamma_{j-1} + \la_{j-1}^2.
\end{equation}
Hence 
\[
   2 \la_{i-1}\cos\gamma_{i-1} = \la_i^2 - \la^2_{i-1} - 1 >
   2 - 2 - 1 = -1,
\]
and
\[
  -\frac12 < -\frac{1}{2\la_{i-1}} < \cos\gamma_{i-1} \le
  \cos\Bigl(\frac{\pi}{2} + \fee\Bigr) = -\sin\fee,
\]
since from \eqref{eqn:lq} we also have $1<\la_{i-1}$.
Therefore
\[
  \frac{\pi}{2} + \fee \le \gamma_{i-1} \le \frac23\pi
  \qquad\text{and}\qquad
  0 \le \fee < \frac{\pi}{6}.
\]
In other words there is a gap greater than $\frac23\pi$
between two neighboring elements of $\sxs$. In a second
step of the proof we will explore possible ranges of $\uw_{i-1}$.
Clearly, the angle between $u_{i-1}$ and $\xx_1$, $\xx_n$
is less or equal than $\frac23\pi$. Therefore exactly
one of the following cases holds.

\textbf{Case 1.} $\uw_{i-1}\in(\frac\pi2-\feebar,\frac\pi2+\feebar)$,
where $\feebar:=\frac\pi6-\fee$. Hence
$\uu_{i-1}\in \forbiddenset$ but also $\uu_{i-1}\in \uset$
from Lemma~\ref{lemma:proof2helper}.
This contradicts Lemma~\ref{lemma:proof2disjoint}.

\textbf{Case 2.}
$\uw_{i-1}\in(\frac32\pi-\feetwobar,\frac32\pi+\feetwobar)$,
where $\feetwobar:=\frac\pi6+\fee$. We can restrict the range
of $\uw_{i-1}$ further by adding the above condition not only for
$\xx_1$ and $\xx_n$, but for all elements of $\sxs$. Doing so
we get that
\[
  \left\{\begin{array}{ll}
  \frac23\pi > \uw_{i-1} - \xw_j, &
    \text{if $\pi\ge\uw_{i-1}-\xw_j$, and}\\
  \frac43\pi < \uw_{i-1} - \xw_j, &
    \text{if $\pi<\uw_{i-1}-\xw_j$.}\\
  \end{array} \right.
\]
Let $k=1,\dots,n-1$ be the greatest index satisfying
$\pi < \uw_{i-1}-\xw_k$. Since $k$ is maximal we have
$\pi\ge\uw_{i-1}-\xw_{k+1}$.
We get $\xw_{k+1} - \xw_k > \frac23\pi$, which shows
that there must be a second gap which is greater than $\frac23\pi$.
After a rotation of the coordinate system
and renumbering the elements of $\sxs$
we may apply Lemma~\ref{lemma:proof2disjoint} again and obtain
a contradiction.

The indirect assumption must have been wrong
in Cases~1 and 2, hence both
$\uastast{2}{1},\uastast{2}{2}\le\sqrt2$.
\end{proof}

\section{Examples}

This section provides examples illustrating that the situation
is more complicated in dimension $d\ge3$.
All examples are unique up to rotation of $\R^d$.

\begin{example}\label{example:equidistant}
For $l\ge1$ we describe the operation of choosing $l+1$
equidistant points $\xx_0,\ldots,\xx_l\in S^{l-1}\subseteq\R^l$.
Equidistant means that the value $s$ of the
scalar product does not depend on the chosen pair of points.
Since all vectors have unit length, the constant scalar product
equals $\cos\alpha$ for some $\alpha\in[0,\pi]$. 
By recursion on $l$ suppose $\tilde\xx_1,\dots,\tilde\xx_l$
have been found in the next lower dimension $l-1$,
with scalar product $\tilde s$.
Set
\[
  \xx_0 = (0,0,\ldots,0,1),\quad
  \xx_1 = ( \tilde\xx_1 \cos\alpha, \sin\alpha),\quad
  \dots,\quad
  \xx_l = ( \tilde\xx_l \cos\alpha, \sin\alpha).
\]
We demand
\[
  \sin\alpha = \scp{\xx_0}{\xx_1} = s = \scp{\xx_i}{\xx_j} =
  \sin^2\alpha + \scp{\tilde\xx_i}{\tilde\xx_j} \cos^2\alpha,
\]
which leads to $s=s^2+(1-s^2)\tilde s$.
Solving this equation gives $s=\frac{\tilde s}{1-\tilde s}$.
It is easy to see that the recursion produces the values
\[
  -1, -\frac12, -\frac13, -\frac14, \ldots
\]
for $s$. Hence, when denoting the scalar product of dimension $l$
by $s_l$, we get $s_l=s=-\frac{1}{l}$.
Knowing $s$ it is also clear that $\xx_0+\ldots+\xx_d=0$ since
$\tilde\xx_1 + \ldots + \tilde\xx_d=0$.
In low dimensions, equidistant points are just two points
on the real line ($l=1$), a regular triangle in a circle ($l=2$),
or a tetrahedron in a 2-sphere ($l=3$).
\end{example}

Clearly, the set $\sxs$ of $d+1$ equidistant points is balanced in
$S^{d-1}\subseteq\R^d$. The problem of finding $\uast(\sxs)$ in this case
was approached by a computer experiment only.
We checked $d=2,\ldots,12$ and found that
$\uast(\sxs) = \frac{a(d)}{d}$, where $a$ is the integer
sequence
\[
   0, 1, 2, 4, 6, 9, 12, 16, 20, 25, 30, 36, 42,\ldots
\]
starting at index $d=0$. Obviously, $\uu_i$ may take only
a certain finite number of values on the lattice
\[
  \Bigl\{\,
    \sum_{i=1}^{d+1} k_i \xx_i
    \,\,\bigl|\bigr.\,\,
    k_i\in\NZ
  \,\Bigr\},
\]
all of which are close to the origin.
For example, there are $3$ possibilities for $d=1$ and $7$ for $d=2$.
The sequence $a$ has relations to other fields and problems
\cite{ATT}. Note also that $a(d)<d\sqrt{d}$, or equivalently
$\uast(\sxs)\le\sqrt{d}$. The latter inequality was an ad-hoc conjecture
for a general set $\sxs$, which turned out to be true only in dimension $d=2$.

\begin{example}\label{example:perp}
For $1\le m\le d$
consider the following set $\sxs=A_{d,m}$ consisting of $n=d+m$ points.
As before, let $e_i\in\R^d$ be the vector with all zero components except
the $i$th which is 1. Then define
\[
  A_{d,m} := \{
    e_1,  e_2, \ldots,  e_d,
   -e_1, -e_2, \ldots, -e_m
  \}.
\]
\end{example}

\begin{proposition}\label{proposition:perp}
Let $\sxs=A_{d,m}$ be as in Example~\ref{example:perp}.
\begin{enumerate}
\item $A_{d,m}$ is $m$-balanced,
\item $\uast(A_{d,m}) \ge \sqrt{d-m+1}$.
\end{enumerate}
\end{proposition}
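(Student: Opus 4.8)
The plan is to establish the two parts separately, with part (i) being a routine dimension count and part (ii) the substantive claim, proved by exhibiting an explicit iteration.

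For part (i), I would check directly that $0=\frac{1}{?}(\ldots)$ sits in the right face of $\conv{A_{d,m}}$. The point $0$ is a convex combination of $e_1,-e_1,\ldots,e_m,-e_m$ (take each pair with weight $\tfrac{1}{2m}$), so $0$ lies in the face spanned by $\{\pm e_1,\ldots,\pm e_m\}$; this is an $(m-1)$-dimensional simplex's... more precisely the convex hull of these $2m$ points is the cross-polytope $\{x:\sum_{i\le m}|x_i|\le 1,\ x_{m+1}=\cdots=x_d=0\}$, which has dimension $m$, and $0$ is its center, hence an interior point of this $m$-dimensional face but of no lower-dimensional face. One must also confirm this is actually a face of the whole polytope $\conv{A_{d,m}}$: the linear functional $x\mapsto -(x_{m+1}+\cdots+x_d)$ is maximized (value $0$) exactly on these points, so $\{\pm e_1,\ldots,\pm e_m\}$ spans a genuine face. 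Thus $A_{d,m}$ is $m$-balanced.

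For part (ii), I would build an iteration reaching length $\sqrt{d-m+1}$. The idea: the "most contested" direction for a $u$ near the positive orthant spanned by $e_{m+1},\ldots,e_d$ is one of the $-e_i$, $i\le m$, since those are the unique points with negative inner product against such $u$. So starting from $\uu_0=0$, first absorb $e_{m+1},e_{m+2},\ldots,e_d$ one at a time — at each such step the current $u$ has nonpositive coordinates in positions $1,\ldots,m$ (in fact zero) and the farthest point of $\sxs$ is some $e_j$ with $j>m$ not yet used (one checks $\scp{e_j}{u}=0$ while $\scp{-e_i}{u}\le 0$ as well, so ties are allowed and we may pick $e_j$). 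After these $d-m$ steps we have $u=e_{m+1}+\cdots+e_d$ with $\|u\|=\sqrt{d-m}$. Now one more step: the farthest point from this $u$ is $-e_1$ (any $-e_i$, $i\le m$, gives $\scp{\cdot}{u}=0$, which ties with every $e_i$, $i\le m$; so picking $-e_1$ is legitimate), giving $\uu=-e_1+e_{m+1}+\cdots+e_d$ of length $\sqrt{1+(d-m)}=\sqrt{d-m+1}$. Hence $\uast(A_{d,m})\ge\sqrt{d-m+1}$.

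The main obstacle is verifying the tie-breaking is legitimate at each step, i.e.\ that the point I want to add really is \emph{among} the farthest points from the current $\uu_i$ (equivalently attains $\min_{x\in\sxs}\scp{x}{\uu_i}$). This requires checking that $\scp{x}{\uu_i}$ is minimized at the desired $x$: during the first phase $\uu_i=e_{m+1}+\cdots+e_{m+k}$ has $\scp{x}{\uu_i}\in\{-1,0,1\}$ with the value $0$ attained by all $\pm e_j$ orthogonal to $\uu_i$ (including every $\pm e_i$ with $i\le m$ and every unused $e_j$ with $j>m+k$) and the value $-1$ not attained by anything — wait, this needs care: $\scp{e_j}{\uu_i}$ is $1$ if $j\in\{m+1,\ldots,m+k\}$ and $0$ otherwise, and $\scp{-e_i}{\uu_i}=0$ for $i\le m$, so the minimum is $0$, attained by many points, among them an unused $e_j$ — good. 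In the last step $\uu=e_{m+1}+\cdots+e_d$ gives $\scp{-e_1}{\uu}=0$ and this ties with $\scp{e_1}{\uu}=0$, etc., again the minimum, so $-e_1$ is a valid choice. Once these case checks are laid out the bound follows immediately; no optimization or analysis is needed beyond bookkeeping.
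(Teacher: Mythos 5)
Your proposal is correct and takes essentially the same route as the paper: exhibit the explicit iteration $u_i = e_{m+1}+\cdots+e_{m+i}$, check that each tie-break is legitimate (the minimum scalar product at every step is $0$, attained by several points including the one you pick), and append one final step choosing some $\pm e_j$ with $j\le m$. Your write-up is in fact a bit more careful than the paper's one-line proof of (ii), which states the iteration only for $1\le i\le d-m$ (reaching norm $\sqrt{d-m}$) and leaves the last step to $\sqrt{d-m+1}$ implicit; you make that step explicit, and you also give a fuller justification in (i) that $\{\pm e_1,\dots,\pm e_m\}$ spans a genuine $m$-dimensional face containing $0$ in its relative interior.
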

\begin{proof}
(i) is clear from the definition; the origin is contained
in the $m$-dimensional face of $\conv{A_{d,m}}$
spanned by $\pm e_1,\dots,\pm e_m$.
For (ii) observe that there is an iteration such that
$\uu_i=e_{m+1}+e_{m+2}+\ldots+e_{m+i}$ for $1\le i\le d-m$.
\end{proof}

It is likely that equality holds in (ii),
but we do not need this stronger assertion.

\begin{example}\label{example:unbounded}
The following construction of $\sxs=B_{d,b}(\eps,\phi)$
depends on the dimension $d$, some integer $1\le b\le d-2$,
some real numbers $\eps>0$ and $0<\phi<\frac{\pi}{2}$, where
the value of $\phi$ is uncritical.
For $c:=d-b$, $2\le c\le d-1$, we have
the orthogonal decomposition $\R^d = \R^{b} \oplus \R^{c}$.
The subspaces contain unit hyperspheres
$S^{b-1}\subseteq \R^{b}$ and
$S^{c-1}\subseteq \R^{c}$.

In $S^{c-1}$ choose $c+1$ points $\xx_0,\xx_1,\dots,\xx_c$
as follows. Fix any direction $v\in S^{c-1}$ and consider
the linear hyperplane $V$ which is perpendicular to $v$.
In $S^{c-2}=V\cap S^{c-1}$ choose $c$ equidistant points
$\bar\xx_1,\dots,\bar\xx_c$ as described in
Example~\ref{example:equidistant}.
Then let
\[
   \xx_i := \cos(\eps)\,\bar\xx_i + \sin(\eps)\,v
\]
for $i=1,\dots,c$. Note that $\xx_1,\dots,\xx_c$ are equidistant
in $S^{c-2}(\cos\eps) := (V+\sin(\eps)v) \cap S^{c-1}$.
The remaining point $\xx_0$ is given by
\[
   \xx_0 := -\cos(\phi)\,\xx_1 + \sin(\phi)\,v.
\]

In $S^{b-1}$ choose $b+1$ equidistant points $\xx_{c+1},\dots,\xx_{d+1}$,
which makes a total of $n=d+2$ points in $\sxs$.
\end{example}


\begin{proposition}\label{proposition:unbounded}
For $d\ge3$ and $\sxs=B_{d,b}(\eps,\phi)$ the following statements are true.
\begin{enumerate}
\item $\sxs$ is $b$-balanced,
\item for any large $M>0$ there is an $\eps>0$ such
that $\uast(\sxs)\ge\sqrt{M}$.
\end{enumerate}
\end{proposition}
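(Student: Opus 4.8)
These two parts are of very different nature: (i) is a short convexity observation, while (ii) calls for constructing, essentially by hand, one admissible iteration whose vectors become arbitrarily long as $\eps\to0$. For (i), write $\R^d=\R^{b}\oplus\R^{c}$ as in the construction, so that $\xx_0,\xx_1,\dots,\xx_c$ lie in the $\R^{c}$ summand (which contains $v$) and the equidistant points $\xx_{c+1},\dots,\xx_{d+1}$ lie in the $\R^{b}$ summand. Apply the linear functional $\ell(z):=\scp{z}{v}$: it vanishes on $\xx_{c+1},\dots,\xx_{d+1}$, equals $\sin\eps>0$ on $\xx_1,\dots,\xx_c$, and equals $\sin\phi-\cos\phi\sin\eps$ on $\xx_0$, which is positive for $\eps$ small. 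Thus $\ell\ge0$ on $\sxs$, and since all points of $\sxs$ are extreme in $\conv{\sxs}$, the set $\{z\in\conv{\sxs}:\ell(z)=0\}=\conv{\xx_{c+1},\dots,\xx_{d+1}}$ is a face of $\conv{\sxs}$; it is a regular $b$-simplex, and by Example~\ref{example:equidistant} its $b+1$ vertices sum to $0$, so the origin is their centroid and lies in the relative interior of this $b$-dimensional face. Hence this face is the carrier (the smallest face) of $0$ in $\conv{\sxs}$, which is precisely the statement that $\sxs$ is $b$-balanced.

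For (ii) the plan is to exhibit one admissible iteration and track it in the orthogonal splitting $\uu_j=\uu_j^{V}+h_j v$, where $\uu_j^{V}$ lies in the $(c-1)$-dimensional span $V$ of $\bar\xx_1,\dots,\bar\xx_c$. It has two phases. In the \emph{start-up} (two steps) take $\cc_0=\xx_0$ (admissible, since $\uu_0=0$ leaves a tie) and $\cc_1=\xx_1$ (a short estimate shows $\xx_1$ is the unique farthest point from $\xx_0$ for $\eps$ small); this yields $\uu_2^{V}=(1-\cos\phi)\cos\eps\,\bar\xx_1\neq0$ and $h_2=(1-\cos\phi)\sin\eps+\sin\phi>0$. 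In the \emph{drift} phase one then cycles through $\cc_j\in\{\xx_2,\xx_3,\dots,\xx_c,\xx_1,\xx_2,\dots\}$; each such step increases $h$ by $\sin\eps$, while $\uu_j^{V}$ runs periodically, with period $c$, through the fixed vectors $S_0:=\uu_2^{V}$ and $S_m:=S_0+\cos\eps(\bar\xx_2+\dots+\bar\xx_{m+1})$ for $1\le m\le c-1$; here $S_{c-1}=-\cos\phi\cos\eps\,\bar\xx_1$ and $S_0$ is recovered after $c$ steps because $\sum_k\bar\xx_k=0$. Writing $S_m=\cos\eps\,\hat S_m$ with $\hat S_m$ independent of $\eps$, each $\hat S_m$ is nonzero, and since $\sum_k\bar\xx_k=0$ this forces $\nu_m:=\min_k\scp{\hat S_m}{\bar\xx_k}<0$; a direct check shows the minimising index is exactly the one used by the cycle, so the $V$-projection of the iteration is a genuine farthest-point iteration within $V$.

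The crux of the proof is that the prescribed $\cc_j=\xx_{k^\ast}$ must be a farthest point of all of $\sxs$ from $\uu_j$, not only of $V$. Since $\uu_j\in\R^{c}$, every point of the $\R^{b}$ summand has scalar product $0$ with $\uu_j$, so one needs $\scp{\uu_j}{\xx_{k^\ast}}=\cos^2\eps\,\nu_m+\sin\eps\,h_j\le0$, which holds as long as $h_j\le H(\eps):=\kappa\cos^2\eps/\sin\eps$, where $\kappa:=\min_m(-\nu_m)>0$ is a fixed constant. One also needs $\xx_{k^\ast}$ to beat the remaining candidate $\xx_0$; expanding $\scp{\uu_j}{\xx_0}=-\cos\phi\,\scp{\uu_j}{\xx_1}+\sin\phi\,h_j$ reduces this to $\cos^2\eps\,(\nu_m+\cos\phi\,\scp{\hat S_m}{\bar\xx_1})\le h_j(\sin\phi-(1+\cos\phi)\sin\eps)$, which holds for all $h_j\ge0$ once $\phi$ is fixed close enough to $\pi/2$ (depending only on $c$) and $\eps$ is small; that $\xx_{k^\ast}$ beats the other $\xx_k$ is immediate from the definition of $\nu_m$. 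So the drift is admissible exactly while $h_j\le H(\eps)$; since $h_j$ grows by $\sin\eps$ at each step, after finitely many steps (about $\kappa/\eps^{2}$ of them) it overshoots $H(\eps)-\sin\eps$, whence $\norm{\uu_j}\ge h_j>\kappa\cos^2\eps/\sin\eps-\sin\eps$. Given $M>0$, choosing $\eps$ small enough that this quantity is $\ge\sqrt M$ (and small enough for the finitely many ``$\eps$ small'' requirements above) yields $\uast(\sxs)\ge\sqrt M$, which is (ii).

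The step I expect to be the real obstacle is precisely this global farthest-point bookkeeping through the drift phase; carrying it through is also what explains why the point $\xx_0$ appears in the construction. Without $\xx_0$, keeping $h_j$ increasing forces the iteration to use $\xx_1,\dots,\xx_c$ forever, which makes $\uu_j^{V}$ follow the farthest-point iteration of the equidistant set $\{\bar\xx_1,\dots,\bar\xx_c\}$ started at $0$; that iteration merely ``increments the smallest coordinate'' and hence returns to $0$, and at such a step $\uu_j=h_j v$, so an $\R^{b}$-point becomes strictly farthest, the iteration drops into the $\R^{b}$ summand, and $h_j$ is frozen near $\sin\phi$. The point $\xx_0$ is exactly what starts the drift at the off-lattice state $S_0=(1-\cos\phi)\cos\eps\,\bar\xx_1$, so that the orbit $S_0,\dots,S_{c-1}$ never meets $0$, the quantity $\min_k\scp{\uu_j}{\xx_k}$ stays negative, and $h_j$ can grow up to order $1/\eps$.
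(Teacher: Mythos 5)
Your proof uses exactly the same iteration as the paper (start with $\xx_0$, then cycle through $\xx_1,\dots,\xx_c$) and verifies exactly the same three admissibility conditions (the chosen point has non-positive scalar product with $\uu_j$, so it beats the $\R^b$ points; it is the minimizer among $\xx_1,\dots,\xx_c$; and it beats $\xx_0$), so this is essentially the paper's argument. What differs is the bookkeeping: rather than expanding all scalar products into closed-form expressions and solving a quadratic in the cycle count $k$ as the paper does, you split $\uu_j=\uu_j^V+h_jv$, observe that $\uu_j^V$ is eventually periodic through a fixed finite orbit $\{S_m\}$, reduce the ``beats the $\R^b$ points'' condition to the linear bound $h_j\le -\nu_m\cos^2\eps/\sin\eps$, and package the whole estimate into the single constant $\kappa=\min_m(-\nu_m)>0$ with $H(\eps)=\kappa\cos^2\eps/\sin\eps$ and the per-step height increment $\sin\eps$, giving the conclusion directly. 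This is a cleaner and more transparent organization, and your closing remark on why $\xx_0$ is needed to kick the $V$-orbit off the lattice (so that $\nu_m<0$ throughout) is a genuinely illuminating observation the paper does not make. Two minor caveats: the claims left as ``a direct check shows'' (uniqueness of $\xx_1$ as farthest from $\xx_0$; that the minimizing index in $V$ matches the prescribed cycle) are true but nontrivial and should really be written out; and your $\xx_0$-condition forces $\phi$ close to $\pi/2$, whereas the paper avoids any restriction on $\phi$ by verifying the stronger fact $\scp{\uu_s}{\xx_0}\ge 0$ directly — harmless since $\phi$ is a free parameter, but worth knowing the slicker route exists.
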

\begin{proof}
(i) is clear from the definition; the origin is contained
in the $b$-dimensional face spanned by $\xx_{c+1},\dots,\xx_{d+1}$.
Note that $\xx_1+\ldots+\xx_c = c\sin(\eps)v$ and
\[
  \sigma := \scp{\xx_i}{\xx_j} = \scp{\bar\xx_i}{\bar\xx_j}
    \cos^2\eps + \sin^2\eps = 1 - \frac{c}{c-1} \cos^2\eps
\]
since $\scp{\bar\xx_i}{\bar\xx_j} = -\frac{1}{c-1}$ for all
$1\le i,j\le c$.
From now on we suppose that $\eps$ is sufficiently small such that
\begin{equation}\label{ineq:sigma}
-\frac{1}{c-1} < \sigma < 0.
\end{equation}
We also have
\[
  \scp{\xx_0}{\xx_i} = \left\{
  \begin{array}{lll}
  -        \cos\phi &+ \sin\phi \sin\eps;& i=1,\\
  - \sigma \cos\phi &+ \sin\phi \sin\eps;& 1<i\le c.
  \end{array}
  \right.
\]
To prove (ii), we show that the iteration which starts
with $\xx_0$ and adds points from $\{\xx_1,\ldots,\xx_{c}\}$
as long as possible is feasible. More precisely,
\[
  \uu_0 = 0,\qquad
  \uu_1 = \xx_0,\qquad
  \uu_2 = \xx_0 + \xx_1,\quad\dots,\quad
  \uu_{c+1} = \xx_0 + \xx_1 + \dots + \xx_c.
\]
In general for $i=0,1,\dots$ we can write
\begin{align}
  \uu_{ic+1} &= \xx_0 + (i-1)(\xx_1 + \xx_2 + \ldots + \xx_c),\nonumber\\
  \uu_{ic+2} &= \xx_0 + (i-1)(\xx_1 + \xx_2 + \ldots + \xx_c) + \xx_1,\nonumber\\
  &\vdots\nonumber\\
  \uu_{ic+c} &= \xx_0 + (i-1)(\xx_1 + \xx_2 + \ldots + \xx_c) +
    (\xx_1 + \xx_2 + \ldots + \xx_{c-1}),\nonumber\\
  \uu_{(i+1)c+1} &= \xx_0 + i(\xx_1 + \xx_2 + \ldots +
  \xx_c).\label{eqniter}
\end{align}
In what follows we fix $0\le i\le k$ and $0\le j\le c-1$ arbitrarily,
and consider step $s:=(i+1)c+j+1$ of the iteration \eqref{eqniter}.
In other words, we want to control the iteration up to and including
step $(k+1)c+m+1$, where $0\le m\le c-1$.
\begin{enumerate}[(a)]
\item To be able to choose $\xx_{j+1}$ in step $s$ we must have
\[
  \scp{\uu_{s}}{\xx_{j+1}} \le 0.
\]
\item Also, to make the choice of $\xx_{j+1}$ work, the scalar product
with all other vectors must be at least as big as the one
from (a), or
\[
  \scp{\uu_{s}}{\xx_{l+1}} \ge
  \scp{\uu_{s}}{\xx_{j+1}}
\]
for all $0\le l\le c-1$.
\item The point $\xx_0$ must not come into play, which is the case when
\[
  \scp{\uu_{s}}{\xx_0} \ge 0.
\]
\item By construction we have
\[
   \scp{\uu_{s}}{\xx_{r+1}}=0
\]
for $c\le r\le d$.
\end{enumerate}

Let us now analyze these conditions. There is nothing to
show for (d). For (c) we compute
\[
  \scp{\uu_{s}}{\xx_0} = \left\{
  \begin{array}{lll}
  1 + ic\sin\eps\sin\phi;& j=0,\\
  1 - \cos\phi + ic\sin\eps\sin\phi - (j-1)\sigma\cos\phi +
    j\sin\eps\sin\phi;& 0<j\le c-1.\\
  \end{array}
  \right.  
\]
From this expression it is clear that (c) is always satisfied.
Looking at (a) and (b) and
observing that $1+(c-1)\sigma = c\sin^2\eps$ we compute
\[
  \scp{\uu_{s}}{\xx_{j+1}} = \left\{
  \begin{array}{lll}
  i\,c\sin^2\eps - \cos\phi &+ \sin\phi \sin\eps;& j=0,\\
  i\,c\sin^2\eps + j\sigma - \sigma \cos\phi &+ \sin\phi \sin\eps;&
    0<j\le c-1
  \end{array}
  \right.
\]
and for $l\not=j$
\[
  \scp{\uu_{s}}{\xx_{l+1}} = \left\{
  \begin{array}{llll}
  ic\sin^2\eps &+ (j-1)\sigma +1-      \cos\phi& + \sin\phi\sin\eps;&
    0=l<j,\\
  ic\sin^2\eps &+ (j-1)\sigma +1-\sigma\cos\phi& + \sin\phi\sin\eps;&
    0<l<j,\\
  ic\sin^2\eps &+ j\sigma -      \sigma\cos\phi& + \sin\phi\sin\eps;&
    l>j.
  \end{array}
  \right.
\]
From these expressions (b) is immediately clear; one just has to
compare the varying terms and to use \eqref{ineq:sigma}.
It remains to analyze Condition (a).
For $j=0$ it can be expressed as
\begin{equation}\label{ineq:a1}
   i \le \frac{\cos\phi - \sin\phi\sin\eps}{c\sin^2\eps},
\end{equation}
for $j>0$ note that we have a set of $c-1$ inequalities, whose
``sharpness'' increases with $j$, cf.~\eqref{ineq:sigma}.
Therefore it suffices to take
the last condition ($j=c-1$) which reads
\begin{equation}\label{ineq:a2}
   i \le \frac{\sigma(\cos\phi-(c-1)) - \sin\phi\sin\eps}{c\sin^2\eps}.
\end{equation}

In the second and last part of the proof,
the assertion is brought into play.
Assume the length $\sqrt{M}$ is reached in step $(k+1)c+m+1$, i.e.
\begin{equation}\label{ineq:length}
  \normsqr{\uu_{(k+1)c+m+1}} \ge M.
\end{equation}
For arbitrary $k$ and $1\le m\le c-1$ we have
\begin{eqnarray*}
\normsqr{\uu_{(k+1)c+m+1}} &=&
    1 + (kc + 2m)kc \sin^2\eps +
    \bigl(1+(m-1)\sigma\bigr)(m-2\cos\phi) +\\
 && 2(kc + m)\sin\eps \sin\phi,
\end{eqnarray*}
while for $m=0$ we get the simpler expression
\begin{equation}\label{ineq:lengthsimple}
  \normsqr{\uu_{(k+1)c + 1}} = 1 + k^2 c^2 \sin^2\eps +
    2kc\sin\eps \sin\phi.
\end{equation}
Assuming $m=0$ (to use the advantages of the simpler form)
and inserting \eqref{ineq:lengthsimple} into \eqref{ineq:length} 
we get an inequality which is quadratic in $k$:
\[
   k^2 + k \frac{2}{c} \frac{\sin\phi}{\sin\eps} +
     \frac{1-M}{c^2\sin^2\eps} \ge 0.
\]
Solving the inequality gives
\begin{equation}\label{ineq:k}
   k \ge \frac{\sqrt{\sin^2\phi - 1 + M} - \sin\phi}{c\sin\eps}.
\end{equation}
To finish the proof, we must put together \eqref{ineq:a1}
and \eqref{ineq:k} as well as \eqref{ineq:a2} and \eqref{ineq:k}.
For the first pairing, solve
\[
   \sqrt{\sin^2\phi - 1 + M} - \sin\phi \le
   \frac{\cos\phi - \sin\phi\sin\eps}{\sin\eps}.
\]
Isolating $M$ yields
\[
   M \le \cos^2\phi \left(1 + \frac{1}{\sin^2\eps}\right).
\]
For small $\eps$, the right-hand side becomes arbitrarily
large, which finishes this part of the proof.
For the remaining pairing, one has to solve
\[
   \sqrt{\sin^2\phi - 1 + M} - \sin\phi \le
   \frac{\sigma(\cos\phi - (c-1)) - \sin\phi\sin\eps}{\sin\eps}.   
\]
Isolating $M$ again gives
\[
   M \le \frac{\sigma^2 (\cos\phi - (c-1))^2}{\sin^2\eps} + \cos^2\phi,
\]
which with small $\eps$ again has an arbitrarily large
right-hand side.
\end{proof}

\begin{example}\label{example:unbounded2}
The following construction of a point set $\sxs=C_{d}(\eps,\mu,\phi)$
depends on the dimension $d\ge3$,
on real numbers $\eps\ge 0$, $\mu>0$ and $0<\phi<\frac{\pi}{2}$, where
the value of $\phi$ is uncritical.
Pick any unit vector $v\in\R^d$ which determines
a hyperplane $V$ of $\R^d$. In $S^{d-2}\subseteq V$ choose $d$
equidistant points $\bar\xx_1,\dots,\bar\xx_d$
as described in Example~\ref{example:equidistant}.
Then define
\[
   \xx_i := \cos(\eps)\bar\xx_i - \sin(\eps)\,v
\]
for $i=1,\dots,d$. The two remaining points are given by
\begin{align*}
   \xx_{d+1} &= - \cos(\mu)\bar\xx_1 + \sin(\mu)\,v,\\
   \xx_0 &= \cos(\phi)\bar\xx_1 + \sin(\phi)\,v.
\end{align*}
Finally let $\sxs:=\{ \xx_0,\xx_1,\dots,\xx_d,\xx_{d+1} \}$.
\end{example}

\begin{proposition}\label{proposition:unbounded2}
For $d\ge 3$ the following statements are true.
\begin{enumerate}
\item $C_d(\eps,\mu,\phi)$ is $d$-balanced for $\eps>0$,
and $(d-1)$-balanced for $\eps=0$, 
\item for any large $M>0$ there is an $\eps>0$ such
that $\uast(C_d(\eps,3\eps,\frac\pi6))\ge\sqrt{M}$,
\item for any large $M>0$ there is a $\mu>0$ such
that $\uast(C_d(0,\mu,\frac\pi6))\ge\sqrt{M}$.
\end{enumerate}
\end{proposition}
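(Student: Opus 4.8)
The plan is to establish (i) by a direct convexity argument and to treat (ii) and (iii) together, by exhibiting one explicit iteration that drifts to infinity along a fixed two-dimensional plane. For (i) I would use the characterization that $0$ is an interior point of $\conv{\sxs}$ exactly when $\max_{x\in\sxs}\scp{x}{u}>0$ for every unit vector $u$. Writing $u=u_V+\beta v$ with $u_V\in V$: if $\beta\le0$, then, since $\bar\xx_1,\dots,\bar\xx_d$ span $V$ and $\sum_i\bar\xx_i=0$, some $\bar\xx_i$ satisfies $\scp{\bar\xx_i}{u_V}>0$, whence $\scp{\xx_i}{u}=\cos(\eps)\scp{\bar\xx_i}{u_V}-\sin(\eps)\beta>0$ when $\eps>0$ (the case $u_V=0$ being immediate); if $\beta>0$, then one of $\xx_0$, $\xx_{d+1}$ has positive inner product with $u$, depending on the sign of $\scp{\bar\xx_1}{u_V}$. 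So $\sxs$ is $d$-balanced for $\eps>0$. For $\eps=0$ the hyperplane $V=v^{\perp}$ supports $\conv{\sxs}$ (all $\bar\xx_i$ lie in $V$ while $\scp{\xx_0}{v},\scp{\xx_{d+1}}{v}>0$) and cuts out the $(d-1)$-dimensional face $\conv{\{\bar\xx_1,\dots,\bar\xx_d\}}$, in whose relative interior $0=\tfrac1d\sum_i\bar\xx_i$ lies; since any face of $\conv{\sxs}$ that contains $0$ contains this whole $(d-1)$-dimensional face, $0$ lies in no $(d-2)$-dimensional face, which is exactly $(d-1)$-balancedness.

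For (ii) and (iii), let $\Pi:=\operatorname{span}(\bar\xx_1,v)$, the plane containing $\xx_0$, $\xx_1$, $\xx_{d+1}$ (note $\xx_1=\bar\xx_1$ in case (iii)), and write a point of $\Pi$ as $(\alpha,\beta)$ for $\alpha\bar\xx_1+\beta v$. The claim is that
\[
  \uu_0=0,\qquad \cc_0=\xx_0,\qquad \cc_{2m+1}=\xx_{d+1},\qquad \cc_{2m+2}=\xx_1\quad(m\ge0)
\]
defines a valid iteration through step $2k+1$ provided $k$ stays below a threshold identified below, so that $\uu_{2k+1}=\xx_0+k(\xx_{d+1}+\xx_1)$. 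A short trigonometric computation gives $\norm{\xx_{d+1}+\xx_1}=2\sin\eps$ in case (ii) and $2\sin(\mu/2)$ in case (iii), together with $\scp{\xx_0}{\xx_{d+1}+\xx_1}>0$ in both. Hence $\normsqr{\uu_{2k+1}}=1+2k\scp{\xx_0}{\xx_{d+1}+\xx_1}+k^2\norm{\xx_{d+1}+\xx_1}^2\ge k^2\norm{\xx_{d+1}+\xx_1}^2$, so $\norm{\uu_{2k+1}}\ge 2k\sin\eps$ (resp.\ $2k\sin(\mu/2)$); it therefore suffices to run $k\ge\sqrt M/(2\sin\eps)$ (resp.\ $\sqrt M/(2\sin(\mu/2))$) cycles.

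The step I expect to be the main obstacle is checking validity of this iteration, i.e.\ that the prescribed $\cc_j$ always minimizes $\scp{\cdot}{\uu_j}$. Because the iterate stays in $\Pi$ and $\xx_2,\dots,\xx_d$ all have the same projection onto $\Pi$ (from $\scp{\bar\xx_i}{\bar\xx_j}=-\tfrac1{d-1}$ and $\sum_i\bar\xx_i=0$), at each step only a few inner-product comparisons remain: $\xx_{d+1}$ against $\xx_0,\xx_1,\xx_2$ at the odd steps $\uu_{2m+1}$, and $\xx_1$ against $\xx_0,\xx_2,\xx_{d+1}$ at the even steps $\uu_{2m+2}$. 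Tracking the $(\alpha,\beta)$-coordinates one finds $\alpha>0$ at the odd steps (automatic, since $\alpha=\cos\phi$ plus a positive increment) and the requirement $\alpha\le0$ at the even steps; after substituting $\phi=\tfrac\pi6$ and the relevant cosines, all the comparisons hold automatically except the single binding one, namely that $\xx_1$ beat $\xx_2,\dots,\xx_d$ at $\uu_{2k}$ — which is exactly $\alpha\le0$ there, and translates into an upper bound of the form $k\le C_d/\sin^2\eps$ (resp.\ $C_d/\sin^2(\mu/2)$) with $C_d>0$ depending only on $d$. The delicate point is confirming that the out-of-$\Pi$ competitor $\xx_2$ never becomes the unique farthest vector before step $2k$; this is where the alternating sign of $\alpha$ along the iteration, and the smallness of $\phi$, are used. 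Finally, combining $k\le C_d/\sin^2\eps$ with $k\ge\sqrt M/(2\sin\eps)$, the two are compatible once $\sin\eps$ (resp.\ $\sin(\mu/2)$) is at most of order $1/\sqrt M$; so for a given $M$ one picks $\eps$ (resp.\ $\mu$) that small and then $k=\lceil\sqrt M/(2\sin\eps)\rceil$, obtaining a valid iteration with $\uast(\sxs)\ge\norm{\uu_{2k+1}}\ge\sqrt M$, which together with (i) proves (ii) and (iii).
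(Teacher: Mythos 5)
Your plan coincides with the paper's proof almost exactly: the same alternating iteration $\uu_{2k+1}=\xx_0+k(\xx_1+\xx_{d+1})$, the same three classes of admissibility conditions, and the same endgame of pitting a lower bound on $k$ (forced by $\normsqr{\uu_{2k+1}}\ge M$) against an upper bound on $k$ (forced by the need to keep choosing $\xx_1,\xx_{d+1}$), and then shrinking $\eps$ (resp.\ $\mu$) so that both can hold. Your simplification $\normsqr{\uu_{2k+1}}\ge k^2\norm{\xx_1+\xx_{d+1}}^2=4k^2\sin^2\tfrac{\mu-\eps}{2}$ is a genuinely cleaner way to get the lower bound on $k$ than the paper's explicit quadratic-solving, and your argument for (i) is a complete version of what the paper only asserts.

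That said, the place you flag as ``the main obstacle'' is indeed the bulk of the paper's proof, and your sketch leaves one concrete nontrivial step unverified. You assert that the only binding comparison is $\xx_1$ versus $\xx_2,\dots,\xx_d$ at the even steps, which reduces to $\alpha(\uu_{2m+2})\le0$ and yields the bound $k\lesssim 1/\sin^2\eps$. But the companion comparison at the odd steps, $\scp{\uu_{2i+1}}{\xx_{d+1}}\le\scp{\uu_{2i+1}}{\xx_m}$ for $2\le m\le d$, is not automatic: it reads $-\tfrac{d}{d-1}\cos\phi\cos\eps+\cos(\phi+\eps)+\cos(\phi+\mu)+i\,\tfrac{d}{d-1}(\cos\mu-\cos\eps)\cos\eps\ge0$ and degrades as $i$ grows, just like the even-step one. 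The paper's key observation is that if $i$ obeys the even-step bound then substituting $i=\tfrac{\cos\eps-\cos\phi}{\cos\eps-\cos\mu}$ into the odd-step inequality collapses it to $-\tfrac{d}{d-1}\cos^2\eps+\cos(\phi+\eps)+\cos(\phi+\mu)\ge0$, which for small $\eps,\mu$ is essentially $\cos\phi\ge\tfrac{d}{2(d-1)}$; this is precisely the reason for fixing $\phi=\tfrac{\pi}{6}$ (so $\cos\phi=\tfrac{\sqrt3}{2}\ge\tfrac34\ge\tfrac{d}{2(d-1)}$ for $d\ge3$). You gesture at this (``the smallness of $\phi$ is used'') but do not carry it out; it is the one computation your outline still needs. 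A minor further point: the threshold $\tfrac{\cos\phi-\cos\eps}{\cos\mu-\cos\eps}$ does not actually depend on $d$, so writing it as $C_d/\sin^2\eps$ slightly misstates the dependence.
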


\begin{proof}
(i) is immediately clear from the definition, in particular for
$\eps=0$ the origin is contained in the $(d-1)$-dimensional face
spanned by $\xx_1,\dots,\xx_d$. We are left with (ii) and (iii)
which are shown simultaneously.
Consider the following finite piece of an iteration for
$C_d(\eps,\mu,\phi)$. Start with $\uu_0=0$, and let
\begin{align*}
  \uu_1 &= \xx_0,\\
  \uu_2 &= \xx_0 + \xx_{d+1},\\
  \uu_3 &= \xx_0 + \xx_1 + \xx_{d+1},\\
  &\vdots\\
  \uu_{2k-1} &= \xx_0 + (k-1)(\xx_1 + \xx_{d+1}),\\
  \uu_{2k}   &= \xx_0 + (k-1)(\xx_1 + \xx_{d+1}) + \xx_{d+1},\\
  \uu_{2k+1} &= \xx_0 + k(\xx_1 + \xx_{d+1}).
\end{align*}
The following conditions (a)--(c) are sufficient
for the iteration to work as above, up to step $2k+1$.
\begin{enumerate}[(a)]
\item We must have $\scp{\uu_l}{\xx_0}\ge 0$ for all $1\le l\le 2k+1$,
i.e.\ $\xx_0$ is never chosen between steps $2$ and $2k+1$ of the
iteration.
\item Additionally, also the scalar product
with the other vector must be at least as big as the chosen one,
meaning
\[
  \scp{\uu_{2i}}{\xx_1} \le \scp{\uu_{2i}}{\xx_{d+1}},
  \quad
  \scp{\uu_{2i+1}}{\xx_{d+1}} \le \scp{\uu_{2i+1}}{\xx_1}
\]
for all $1\le i\le k$.
\item To be able to choose $\xx_{d+1}$ in step $2i$ and $\xx_1$
in step $2i+1$ we must have
\[
  \scp{\uu_{2i}}{\xx_1} \le \scp{\uu_{2i}}{\xx_m},\quad
  \scp{\uu_{2i+1}}{\xx_{d+1}} \le \scp{\uu_{2i+1}}{\xx_m},
\]
for all $1\le i\le k$ and $2\le m\le d$.
\end{enumerate}

In order to examine Condition (a) it is straightforward to compute
\[
  \scp{\uu_l}{\xx_0} = \left\{
  \begin{array}{lll}
  1 - \cos(\phi+\eps)
  &+\,\, i \bigl(\cos(\phi+\eps) - \cos(\phi+\mu)\bigr);& l=2i,\\
  1
  &+\,\, i \bigl(\cos(\phi+\eps) - \cos(\phi+\mu)\bigr);& l=2i+1.
  \end{array}
  \right.
\]
Since $\mu>\eps$ for both (ii) and (iii), the terms on the right-hand
side are always non-negative. Therefore (a) does not impose any
additional condition. Similarly, for Condition (b) we compute
\[
  \scp{\uu_l}{\xx_1} = \left\{
  \begin{array}{lll}
  \cos(\phi+\eps) - 1
  &+\,\, i \bigl(1 - \cos(\mu - \eps)\bigr);& l=2i,\\
  \cos(\phi+\eps)
  &+\,\, i \bigl(1 - \cos(\mu - \eps)\bigr);& l=2i+1,
  \end{array}
  \right.
\]
\[
  \scp{\uu_l}{\xx_{d+1}} = \left\{
  \begin{array}{lll}
  \cos(\mu-\eps) - \cos(\phi+\mu)
  &+\,\, i \bigl(1 - \cos(\mu - \eps)\bigr);& l=2i,\\
  -\cos(\phi+\mu)
  &+\,\, i \bigl(1 - \cos(\mu - \eps)\bigr);& l=2i+1,
  \end{array}
  \right.
\]
which is equivalent to
\begin{align*}
\cos(\phi+\eps) - 1 &\le \cos(\mu-\eps) - \cos(\phi+\mu),\\
-\cos(\phi+\mu)     &\le \cos(\phi+\eps).
\end{align*}
Again, since both inequalities are always true, (b)
does not introduce new conditions either. 
Finally, Condition (c) requires
\begin{align*}
  \scp{\uu_{2i}}{x_m} - \scp{\uu_{2i}}{\xx_1} &=
    \frac{d}{d-1} \cos\eps \bigl(
      \cos\eps - \cos\phi + i (\cos\mu - \cos\eps)
    \bigr)
  \,\,\ge\,\, 0,\\
  \scp{\uu_{2i+1}}{x_m} - \scp{\uu_{2i+1}}{\xx_{d+1}} &=
    -\frac{d}{d-1} \cos\phi\cos\eps + \cos(\phi+\eps) +
    \cos(\phi+\mu) +\\
     & \qquad i\,\frac{d}{d-1} (\cos\mu - \cos\eps)\cos\eps
  \,\,\ge\,\, 0.
\end{align*}
We demand that if $i$ satisfies the first inequality,
then it shall also satisfy the second. This leads to
the additional condition
\[
  \frac{\cos\phi - \cos\eps}{\cos\mu - \cos\eps} \le
  \frac{\frac{d}{d-1}\cos\phi\cos\eps - \cos(\phi+\eps) -
  \cos(\phi+\mu)}{\frac{d}{d-1}(\cos\mu - \cos\eps)\cos\eps},
\]
which is satisfied if $\frac34\le\cos\phi$, which is the
reason for the choice of $\phi=\frac\pi6$.
Summing up we are left with the condition
\begin{equation}\label{ineq:cai}
  i \le \frac{\cos\phi - \cos\eps}{\cos\mu - \cos\eps}.
\end{equation}

We can now finish the proof for (ii) and (iii).
If the length $\sqrt{M}$ is reached in step $2k+1$, then
we have
\[
  \normsqr{\uu_{2k+1}} = 1 +
  2 k   \bigl(\cos(\phi+\eps) - \cos(\phi+\mu)\bigr) +
  2 k^2 \bigl(1-\cos(\mu-\eps)\bigr) \ge M.
\]
Solving the quadratic inequality in $k$ and using standard
trigonometric identities we get
\begin{equation}\label{ineq:clc}
  k \ge \frac{\sqrt{
  \sin^2(\phi + \frac{\mu+\eps}{2}) + M - 1} -
  \sin  (\phi + \frac{\mu+\eps}{2})}{2 \sin\frac{\mu-\eps}{2}}.
\end{equation}
Putting together \eqref{ineq:cai} and \eqref{ineq:clc} we get
\[
  \frac{\cos\phi - \cos\eps}{\cos\mu - \cos\eps} \ge
  \frac{\sqrt{
  \sin^2(\phi + \frac{\mu+\eps}{2}) + M - 1} -
  \sin  (\phi + \frac{\mu+\eps}{2})}{2 \sin\frac{\mu-\eps}{2}}.
\]
Finally we isolate $M$ and arrive at
\[
  M \le \frac{(\cos\eps-\cos\phi)^2}{\sin^2\frac{\mu+\eps}{2}} +
  \frac{2(\cos\eps-\cos\phi)\sin(\phi+\frac{\mu+\eps}{2})}{
  \sin\frac{\mu+\eps}{2}} + 1.
\]
For (ii) replace $\mu$ by $3\eps$, for (iii) set $\eps=0$.
In both cases the right-hand side becomes arbitrarily large
when $\eps$ resp.\ $\mu$ approaches zero.
\end{proof}

\bibliography{notes}
\bibliographystyle{amsalpha}
\end{document}